\documentclass[11pt]{article}
\usepackage{fullpage}
\usepackage[utf8]{inputenc}
\usepackage[T1]{fontenc}

\usepackage{graphicx}
\usepackage{amsmath,amsthm,amssymb}
\usepackage{algorithm}
\usepackage{algpseudocode}
\usepackage{multirow}
\usepackage{makecell}

\usepackage{thm-restate,color,xspace}
\usepackage{comment}
\usepackage{thmtools}
\usepackage{xcolor}
\usepackage{nameref}
\usepackage{array}
\usepackage{mathrsfs}
\usepackage{enumerate}
\usepackage{enumitem}

\definecolor{ForestGreen}{rgb}{0.1333,0.5451,0.1333}
\definecolor{DarkRed}{rgb}{0.65,0,0}
\definecolor{Red}{rgb}{1,0,0}
\usepackage[linktocpage=true,
pagebackref=true,colorlinks,
linkcolor=DarkRed,citecolor=ForestGreen,
bookmarks,bookmarksopen,bookmarksnumbered]
{hyperref}
\usepackage[capitalize]{cleveref}

\usepackage{xcolor}
\usepackage{nameref}

\usepackage{cleveref}

\makeatother

\usepackage{babel}
\usepackage{hyperref}

\declaretheorem[numberwithin=section]{theorem}
\declaretheorem[numberlike=theorem]{lemma}

\declaretheorem[numberlike=theorem]{observation}

\declaretheorem[numberlike=theorem,style=definition]{remark}

\declaretheorem[numberlike=theorem,name=Question]{question}

\Crefname{question}{Question}{Questions}

\newcommand{\ignore}[1]{}

\global\long\def\level{{\sf level}}
\global\long\def\dist{{\sf dist}}

\global\long\def\OPT{\mathsf{OPT}}

\global\long\def\calM{\mathcal{M}}

\global\long\def\inh{{\rm inh}}
\global\long\def\orig{{\rm orig}}
\global\long\def\scrC{\mathscr{C}}
\global\long\def\LP{{\rm LP}}
\global\long\def\cost{{\sf cost}}
\global\long\def\Ball{{\sf Ball}}
\global\long\def\ActUn{{\sf ActUn}}

\usepackage[colorinlistoftodos]{todonotes}
\definecolor{debianred}{rgb}{0.84, 0.04, 0.33}

\title{Online Steiner Forest with Recourse}
\author{Yaowei Long\thanks{University of Michigan, \texttt{yaoweil@umich.edu}. The work was done while the author was an intern at Microsoft Research.} 
\and
Sepideh Mahabadi\thanks{Microsoft Research, \texttt{smahabadi@microsoft.com}.}
\and
Sherry Sarkar\thanks{Carnegie Mellon University, \texttt{sherrys@andrew.cmu.edu}.}
\and
Jakub Tarnawski\thanks{Microsoft Research, \texttt{jakub.tarnawski@microsoft.com}.}
}

\date{}
\begin{document}

\maketitle
\thispagestyle{empty}

\hypersetup{pageanchor=false}

\begin{abstract}
In the online Steiner forest problem we are given a graph $G$, and a sequence of terminal pairs $(u_i,v_i)$ which arrive in an online fashion. We are asked to maintain a low-cost subgraph in which each $u_i$ is connected to $v_i$ for all the pairs that have arrived so far.
If we are not allowed to delete edges from our solution, then the best possible competitive ratio is $\Theta(\log n)$. In this work, we initiate the study of low-recourse algorithms for online Steiner forest. We give an algorithm that maintains a constant-competitive solution and has an amortized recourse of $O(\log n)$, i.e., inserts and deletes $O(\log n)$ edges per demand on average.
\end{abstract}

\newpage
\clearpage
\hypersetup{pageanchor=true}
\pagenumbering{arabic}

\section{Introduction}

The \textbf{Steiner tree} problem is a classic question in network design.
Given a weighted graph and a subset of vertices called {\em terminals}, it asks to compute the cheapest tree that connects the terminals.
Since being included on Karp's list of 21 NP-complete problems~\cite{Karp72}
it has been extensively studied.
The problem
is known to be APX-hard~\cite{chlebik2008steiner};
however, getting a constant-factor approximation is straightforward, as computing a minimum spanning tree on the set of terminals already yields a 2-approximation~\cite{KouMB81}.
The current best approximation guarantee is $\ln{4} + \varepsilon <1.39$~\cite{byrka2010improved}.

In the \textbf{online} version of the \textbf{Steiner tree} problem,
introduced in a seminal work of Imase and Waxman~\cite{imase1991dynamic},
the terminals arrive online,
and after each arrival the algorithm must produce a solution that connects the current terminal set
--
without knowledge of the future arrivals
and without the possibility of deleting already added edges.
They
showed that the natural greedy algorithm is $O(\log n)$-competitive,
and that no algorithm can obtain a better competitive ratio.

Motivated by this impossibility result,
Imase and Waxman~\cite{imase1991dynamic} also asked the question of whether deleting a small number of edges
would allow one to maintain a constant-competitive solution.
They showed how to maintain such a Steiner tree
while making $O(n^{3/2})$ changes per $n$ arrivals
(which improves upon the $O(n^2)$ changes needed if one naively recomputes the tree  after each arrival).
In other words, they gave a constant-competitive algorithm with \emph{$O(\sqrt{n})$ amortized recourse}.
This was eventually improved over 20 years later by Megow, Skutella, Verschae, and Wiese~\cite{MegowSVW12}
to constant amortized recourse.
Finally, Gu, Gupta, and Kumar~\cite{GuGK13} showed that it is in fact enough to make just a single edge swap per arrival.

In the more general \textbf{Steiner forest} problem,
instead of a set of terminals we are given a set of {\em terminal pairs},
and asked to compute the cheapest forest that connects the vertices of each terminal pair together.
The first (offline) approximation algorithm for the problem is a primal-dual approach by Agrawal, Klein, and Ravi \cite{agrawal1995trees} that yields a $2$-approximation;
this was further generalized by Goemans and Williamson~\cite{goemans1995general},
and the approximation ratio was improved only very recently to $2-2^{-11}$~\cite{ahmadi2025breaking}
and subsequently to
1.994~\cite{GuptaT25}.
As for algorithms based on more combinational techniques,
a natural greedy algorithm that repeatedly connects the closest yet-unconnected terminal pair
is no better than $\Omega(\log n)$~\cite{ChenRV10},
but Gupta and Kumar~\cite{DBLP:conf/stoc/Gupta015} showed that the \emph{gluttonous algorithm},
which repeatedly connects the two closest terminals (possibly not from the same pair),
does yield a constant-factor approximation.
Furthermore,
Groß, Gupta, Kumar, Matuschke,
Schmidt, Schmidt, and Verschae~\cite{GrossGKMSSV18}
gave a constant-factor approximation based on local search.

The focus of this work is the \textbf{online} version of the \textbf{Steiner forest} problem,
introduced by Westbrook and Yan~\cite{WestbrookY95},
who gave an $\tilde{O}(\sqrt{n})$-competitive algorithm.
Next, Awerbuch, Azar, and Bartal~\cite{AwerbuchAB96}
showed that the greedy algorithm is $O(\log^2 n)$-competitive
using a primal-dual analysis.
Finally, Berman and Coulston~\cite{BermanC97}
gave a primal-dual $O(\log n)$-competitive algorithm.
(This is optimal due to the abovementioned hardness result for online Steiner tree~\cite{imase1991dynamic}.
Their algorithm is not greedy;
the competitive ratio of greedy is a prominent open problem, see e.g.~\cite{bamas2022improved}.)

In contrast to Steiner tree, for Steiner forest
no $o(\log n)$-competitive, low-recourse algorithm has been known.
Finding such an algorithm has been listed as ``an interesting open problem'' in~\cite{gupta2014online}
and ``still wide open'' in~\cite{GrossGKMSSV18}.

\subsection{Our Results}

In this work we give the first low-recourse constant-competitive algorithm for online Steiner forest.

\begin{restatable}{theorem}{Thmmain}
\label{thm:main}
    There is an algorithm for online Steiner forest
    that, over the course of $n$ arrivals
    of terminal pairs $(u_i,v_i)$
    in a metric space,
    maintains an $O(1)$-competitive solution
    while inserting and deleting at most $O(n \log n)$ edges.
\end{restatable}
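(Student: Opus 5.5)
We plan to build on the gluttonous algorithm of Gupta and Kumar~\cite{DBLP:conf/stoc/Gupta015}, which gives an offline $O(1)$-approximation, and make it robust to online arrivals. The key structural device is a \emph{level} decomposition. Round all distances to powers of $2$. Every terminal carries a $\level$, and at level $i$ we keep a ``ball'' of radius about $2^i$ around it, in the spirit of the moat-growing/dual view. Terminals that are still active (their partner lies outside their current component) keep growing. Two active components whose balls touch at level $i$ are merged by an edge of length $O(2^i)$.

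The solution we maintain is exactly the forest that gluttonous merging produces on the current active terminals. When a new pair arrives, we recompute only the merges that the new pair changes. We then argue that each existing merge is invalidated only when some terminal's level strictly increases. Since levels are bounded by $O(\log n)$ after an aspect-ratio reduction, the recourse amortizes to $O(\log n)$ per demand. Concretely, we charge each inserted or deleted edge to a (terminal, level) token, and we show each token is spent $O(1)$ times.

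For the cost bound we follow the Gupta--Kumar analysis. First we build a feasible dual, i.e.\ disjoint balls or moats, from the active terminals at each level. Then we show that the total length of merge edges at level $i$ is $O(2^i)$ times the number of active components at level $i$. Summing over levels bounds the cost by $O(1)$ times the dual, and hence by $O(\OPT)$. One new point is needed here. Because we never recompute from scratch, our forest may differ from the true gluttonous forest. We must show that it is still gluttonous with respect to a slightly perturbed, ``inherited'' order. To do this we add slack, merging at distance $c\cdot 2^i$ rather than exactly at the threshold. This lets a stale merge remain valid until the level structure changes substantially.

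We expect the main obstacle to be the interaction between cost and recourse. A new pair can reactivate or deactivate terminals and so cause a cascade of merges across many levels. The plan is a potential-function argument: the potential is $\sum_{\text{terminals}} (\log n - \level)$, a cascade step happens only when some level rises, and we must show that no single arrival creates more than $O(\log n)$ potential. If this potential does not decrease on every cascade step, we will try delaying (lazy) updates, accepting a constant-factor cost blowup. A final check is the reduction from general metrics with unbounded aspect ratio, since the number of levels must be made $O(\log n)$. We plan to do this by letting only levels within $O(\log n)$ of the current demand's scale be active.
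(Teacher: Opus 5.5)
There is a genuine gap, and it is the obstacle the paper calls \emph{the barrier}. You treat each merge as one edge (``merged by an edge of length $O(2^i)$'') and then charge every inserted or deleted edge to a (terminal, level) token. But in the (timed) gluttonous procedure, closeness of active clusters is measured in the contracted metric $\calM/\scrC_i$, where \emph{inactive} clusters are contracted too. The witness for $\dist_{\calM/\scrC_i}(C_1,C_2)<2^{i+1}$ is a path that may hop through many inactive clusters, and realizing the merge means buying every original edge on that path. A direct edge between $C_1$ and $C_2$ may be far longer than $2^{i+1}$, because it cannot use the large inactive clusters in between. So even an $O(\log n)$ amortized bound on the number of changed \emph{merges} gives no bound on edge recourse. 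A revoked merge can free a long path and its replacement can buy another, and your potential $\sum(\log n-\level)$ counts merges, not edges.

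The missing idea is \emph{pinning}. For each non-inherited merge, buy a shortest path $P$ in $(\calM/\scrC^{(t)}_i)/A^{(t)}$, with already-pinned edges contracted, and pin its $\lfloor|P|/\lambda\rfloor$ cheapest edges forever. Short paths are pooled in a buffer until $\lambda$ edges accumulate. Pinned edges then stay acyclic, so there are at most $2n-1$ of them. Each one pays for $O(\lambda)$ bought edges, plus a leftover of fewer than $\lambda$ per arrival, giving $O(n\lambda)$ insertions independently of the number of levels. The price is that the solution is no longer ``exactly the gluttonous forest'': it also carries pinned edges of revoked merges. So one must bound the cost of \emph{all} merges ever made. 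The paper does this by dual fitting (\Cref{lemma:OnlineCost}: non-inherited level-$i$ merges over the whole run cost $O(\OPT)$), together with the fact that only the top $\lceil\log n\rceil+2$ levels produce edges costlier than $\OPT/n$. With $\lambda=\log n$, the pinned edges then cost $O(\OPT)$.

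Several supporting steps also do not hold as written.

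The claim that a merge is invalidated only when some terminal's level strictly increases is unsupported. In the hierarchy you build on, levels are fixed at $\lceil\log_2\dist_{\calM}(v,\text{mate})\rceil$. Yet one new pair can create low-level merges that change both the contracted metric and which clusters stay active at all higher levels. Old merges then become redundant and new virtual edges appear with no level changing. The paper handles this by biasing each level-$i$ spanning forest toward inherited virtual edges (giving $\scrC^{(t-1)}_{i+1}\preceq\scrC^{(t)}_{\inh,i}$, \Cref{lemma:ClusteringRelationInherit}) and charging only the non-inherited ones.

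Your cost sketch builds a dual at each level and sums over levels. Such per-level duals are only jointly feasible after scaling down by the number of levels. That is exactly why the paper's own dual fitting yields only $O(\OPT)$ \emph{per level}, which must then be divided by $\lambda$ through pinning. The $O(1)$ bound for the current forest comes from \Cref{lemma:OfflineCost}, used as a nontrivial black box.

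Finally, restricting to $O(\log n)$ levels around ``the current demand's scale'' would, as stated, drop merges that older small-scale demands need, and the relevant window drifts as $\OPT$ grows. In the paper the level count never enters the recourse bound. It is used only for cost, where cheap pinned edges are absorbed because there are at most $2n-1$ of them.
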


Our approach in fact yields a tradeoff: for any parameter $\lambda\in[1,\log n]$, it achieves competitive ratio $O(\log(n)/\lambda)$ and total recourse $O(n\lambda)$. See \Cref{thm:FullMain} for a detailed version of \Cref{thm:main}.

\begin{remark}
    As a byproduct of our approach, we also get an $O(\log n)$-competitive algorithm for the online Steiner forest problem (the classic, no-recourse setting).
    At a high level, this follows by neglecting to delete the edges that our algorithm decides to delete,
    and roughly corresponds to simulating the gluttonous algorithm in an online fashion (see \Cref{sect:OnlineGlut} for a formal algorithm description).
    This matches the optimal guarantee of Berman and Coulston~\cite{BermanC97} using a fully combinatorial algorithm
    as opposed to their primal-dual one.
\end{remark}

\subsection{Related Work}

Online algorithms with low recourse
(often called \emph{consistent})
are
a very active and growing research area. Such algorithms have been proposed for many problems:
clustering and facility location \cite{LattanziV17, GuoKLX20, CAHP19, FichtenbergerLNS21,BhattacharyaLP22,LackiHGJR24,ForsterS25,chan2025online,buchbinder2025competitively},
scheduling and load balancing~\cite{PW93, Wes00, AGZ99, AwerbuchAPW01, SandersSS09, SkutellaV10, EL14,GuptaKS14,bernstein_et_al:LIPIcs.ITCS.2017.51,GalvezSV20,KrishnaswamyLS23,FoussoulGK24},
matching~\cite{GKKV95, CDKL09, BLSZ14,GuptaKS14,AngelopoulosDJ18,BernsteinHR19,MegowN20,SolomonS21,BhoreFT24},
matroid intersection~\cite{BuchbinderGHKS24},
set cover \cite{GuptaKKP17,AbboudAGPS19,BhattacharyaHN19,GuptaL20,BhattacharyaHNW21,AssadiS21,SolomonU23,SolomonUZ24,BukovSZ25,bhattacharya2025fullydynamicsetcover,solomon2025dynamicsetcoverworstcase},
graph coloring \cite{SolomonW20}, 
edge coloring~\cite{BhattacharyaGW21},
edge orientation \cite{BrodalF99,SawlaniW20,BeraBCG22},
online learning~\cite{JagharghKLV19},
maximal independent sets \cite{Censor-HillelHK16,AssadiOSS18,BehnezhadDHSS19}, 
broadcast range assignment~\cite{dBSS24},
spanners \cite{BaswanaKS12,BhattacharyaSS22},
discrepancy~\cite{GuptaGKKS22},
chasing positive bodies~\cite{BhattacharyaBLS23},
submodular maximization~\cite{DuettingFLNZ24,DuttingFLNSZ25,BuchbinderNW25},
submodular cover~\cite{GuptaL20},
or
knapsack~\cite{BockenhauerKMRSW26}.

\paragraph{Fully dynamic online Steiner tree.}
Imase and Waxman's $O(\sqrt{n})$-recourse result for Steiner tree~\cite{imase1991dynamic}
in fact also holds
in a more general setting where the input sequence consists of terminal insertions as well as deletions.
For this setting,
Łącki, Oćwieja, Pilipczuk, Sankowski, and Zych~\cite{LackiOPSZ15}
gave a constant-competitive algorithm with $O(\log \Delta)$ recourse,
where $\Delta$ is the ratio of the maximum to
minimum distance in the metric.
Gupta and Kumar~\cite{gupta2014online}
improved this to $O(1)$ recourse.
Gupta and Levin~\cite{GuptaL20}
recovered the result of~\cite{LackiOPSZ15}
in a more general setting of submodular cover.
In all of these results, the recourse bounds are amortized.

\subsection{Technical Overview}
A natural attempt at maintaining an online $O(1)$-approximate Steiner forest is to, for each arrival in the input sequence, independently define the current snapshot of the online solution as the outcome of some offline $O(1)$-approximate algorithm for the current instance. This approach will automatically give an $O(1)$-approximate online solution. To  control the recourse, the low-recourse online Steiner tree algorithm \cite{GuGK13} exploits a well-behaved offline algorithm which constructs an offline solution under the guidance of a \emph{clustering procedure}.

To design a low-recourse online Steiner forest algorithm, we also start with this framework, and fortunately, a generalized clustering procedure already exists for the Steiner forest problem \cite{DBLP:conf/stoc/Gupta015}. However, the following fundamental difference between clustering procedures for Steiner tree and Steiner forest problems prevents us from continuing with the same approach as in \cite{GuGK13}.

\paragraph{The Barrier.} On a high level, a clustering procedure will maintain a clustering $\scrC$ (i.e., a partition) of the terminals and iteratively merge two \emph{active} clusters in $\scrC$ which are close enough in the contracted metric $\calM/\scrC$ (we can think of the input metric $\calM$ as a complete weighted graph on the terminals\footnote{In our model, we assume that the input metric $\calM$ has no Steiner node, i.e., $\calM$ is a metric on terminals. See \Cref{sec:prelim} and particularly \Cref{remark:model} for a further discussion.} and of $\calM/\scrC$ as the graph with each cluster contracted into a single vertex). In the clustering procedure for  Steiner tree, clusters are always active, which means that the procedure is equivalent to merging clusters which are close in the original metric $\calM$, and therefore, adding one edge to the solution suffices to simulate this merge. In contrast,
in the clustering procedure for Steiner forest,
some clusters will become inactive (i.e., they will stay in the clustering without participating in further merge operations). Hence, we need to measure the closeness in the contracted metric $\calM/\scrC$, and thus simulating a merge operation between clusters $C_{1},C_{2}\in\scrC$ requires us to add into the solution a $C_{1}$-$C_{2}$ shortest path in $\calM/\scrC$, which potentially consists of many edges.

We note that the difference between the clustering procedures for these two problems is rooted in the inherent difference between Steiner forest and Steiner tree: a Steiner tree solution must be connected, while a Steiner forest solution may not. This is also the fundamental barrier  to overcome when designing Steiner forest algorithms in other settings (e.g., the offline and classic online settings).

In other words, while applying the approach in \cite{GuGK13} to Steiner forest may control recourse with respect to merges and their revocations (that is, when comparing the clustering procedures for two consecutive arrivals $t-1$ and $t$, some merges done at $t-1$ may be revoked, and new merges may be performed at $t$), it remains unclear how to bound the recourse in terms of edge changes.

\paragraph{Step 1: Pinning Edges.} We proceed completely differently from \cite{GuGK13} and use the idea of \emph{pinning edges}.
On a high level, when performing a merge by buying a shortest path, we \emph{pin} the \emph{cheapest} $(1/\log n)$-fraction of edges on this path, where pinning an edge means that it will not be deleted in the future, even if the corresponding merge is later revoked. To bound the recourse
without harming the approximation, we will guarantee the following.
\begin{itemize}
\item The number of pinned edges is linear in $n$ at the end, which means that the total recourse will be bounded by $O(n\log n)$. This is relatively simple: intuitively, we will enforce the set of pinned edges to be acyclic. 
\item The cost of the pinned edges should always be within a constant factor of the optimum, so that we can still guarantee a constant-factor approximation. To this end, an important step is to bound the total cost of all merges throughout the algorithm by $O(\log n\cdot \OPT)$, including those that are later revoked (see the following Step 2 for a detailed discussion). This will bound the total cost of pinned edges by $O(\OPT)$, since whenever we buy a path, the cheapest $(1/\log n)$-fraction of its edges, which we pin, have cost at most $(1/\log n)$-fraction of the path cost.
\end{itemize}

\paragraph{Step 2: Bounding the Total Merging Cost.} We employ a particular clustering procedure from the offline \emph{timed gluttonous algorithm}~\cite{DBLP:conf/stoc/Gupta015}. It proceeds by \emph{levels}, from $0$ to the top. At each level $i$, it iteratively merges two clusters if their distance is in $[2^{i},2^{i+1})$ (the merging cost is at most their distance). Furthermore, the clustering procedure allows some flexibility in choosing the order of level-$i$ merges; intuitively, we prioritize the level-$i$ merges that already existed in the previous arrival. 

For the above clustering procedure, we can show that for a fixed level $i$, the cost of all merging operations at this level throughout the algorithm is bounded by $O(\OPT)$ using a dual-fitting framework. At a high level, the dual-fitting argument is intuitive: we want to maintain a set of source vertices and obtain a feasible dual solution by growing balls around them. The subtlety in the formal analysis is that choosing (and maintaining) the right source vertices requires exploiting how the offline clustering procedures relate to one another across different arrivals.

Finally, the total merging cost is $O(\log n\cdot\OPT)$, since the merges from levels beyond the top $O(\log n)$ will have negligible costs. We emphasize that this total merging cost is not the cost of our online solution. The online solution consists of (i) the output of an offline $O(1)$-approximate algorithm for the current instance, and (ii) the edges pinned up to that point, which have cost $O(\OPT)$ as long as the total merging cost is $O(\log n\cdot \OPT)$.

\subsection{Outline}

In \Cref{sec:prelim}, we introduce the problem setting and notation. In \Cref{sect:Clustering}, we describe an offline clustering procedure from the timed gluttonous algorithm \cite{DBLP:conf/stoc/Gupta015}, which will be used in our online algorithm. In \Cref{sec:our-alg}, we show our algorithm for maintaining a forest based on the clustering procedure while incurring little recourse. Lastly, we discuss open problems in \Cref{sec:future}.

\section{Preliminaries} \label{sec:prelim}

We use $(\calM,D)$ to denote an (offline) Steiner forest instance, where $D=\{(u_{i},v_{i})\mid 1\leq i\leq n\}$ is a set of $n$ demand pairs, and $\calM = (V,\dist_{\calM})$ is a metric on the \emph{terminals} $V = \{u_{i},v_{i}\mid 1\leq i\leq n\}$.
We call $v_i$ the \emph{mate} of $u_i$ (and vice versa). 
We assume $\dist_{\calM}$ takes values at least $1$, but do not require it to be bounded by a polynomial of $n$, so this assumption is without loss of generality by scaling. For ease of presentation, we assume that each terminal participates in exactly one demand pair. The more general case, where a terminal may belong to multiple demand pairs, can be handled without additional difficulty\footnote{More explicitly, if two terminals $s_i$ and $t_j$ are at the same point $v$ of $\mathcal{M}$, we may replace $v$ with $v'$ and $v''$ at distance $\varepsilon > 0$, and then scale all distances by $\frac{1}{\varepsilon}$.}.

The graph representation of the metric $\calM$ is an undirected complete graph $(V,E)$, where each edge $e=(u,v)\in E$ has cost $\cost(e) = \dist_{\calM}(u,v)$. Without ambiguity, we use $\calM$ to denote this original graph and we call edges in $E$ \emph{original edges} (to differentiate them from \emph{virtual edges} that we introduce later). A feasible solution to the Steiner forest instance $(\calM,D)$ is a subset of original edges $F\subseteq E$ such that each demand pair $(u,v)\in D$ has $u$ and $v$ connected in the subgraph $(V,F)$. We do not require a feasible solution $F$ to be acyclic (a forest), but note that an optimal solution must be a forest.

In the online setting, the instance $(\calM,D)$ is given in an online fashion. At each moment (called \emph{arrival}) $1\leq t\leq n$, a demand pair $(u^{(t)},v^{(t)})\in D$ arrives. Moreover, $\dist_{\calM}(u^{(t)},v^{(t)})$ and $\dist_{\calM}(u^{(t)},x),\dist_{\calM}(v^{(t)},x)$ for all arrived terminals $x$ are revealed. In other words, if we let $D^{(t)}$ denote the set of arrived demand pairs and $V^{(t)}$ denote the arrived terminals, then after this arrival, we only know the submetric $\calM^{(t)}$ of $\calM$ induced by $V^{(t)}$. 

For the online instance $\{(\calM^{(t)},D^{(t)})\mid 1\leq t\leq n\}$, an online solution $\{F^{(t)}\mid 1\leq t\leq n\}$ has competitive ratio $\alpha$ and amortized recourse $\delta$ if each $F^{(t)}$ is an $\alpha$-approximate solution to the instance $(\calM^{(t)},D^{(t)})$, and the total number of edge insertions and deletions required to maintain the online solution across all $n$ arrivals is at most $n\delta$.

\begin{remark}
\label{remark:model}
The way we define a Steiner forest instance $(\calM,D)$ may seem unusual in the sense that $\calM$ is a metric only on the set of terminals (without Steiner vertices). However, we point out that this model is in fact common in the literature on online Steiner trees with low recourse \cite{MegowSVW12,GuGK13,gupta2014online}. The motivation for this definition is that maintaining a Steiner tree/forest in a general (non-metric) graph with low recourse is unrealistic, since even purchasing a simple path between two terminals can already incur large recourse.\footnote{In the seminal work \cite{imase1991dynamic} on online Steiner trees with recourse, the authors consider a different model in which recourse is measured by the number of \emph{primitive path} insertions and deletions, allowing them to work with general graphs.} Moreover, restricting to the \emph{terminal-only} metric increases the solution cost by at most a factor of 2 compared to the general model, so this model still captures the Steiner forest problem well, particularly when we aim for an $O(1)$-approximation.
\end{remark}

\begin{remark}
We assume that, at the beginning, we know [a constant-estimation $\hat{n}$ of] the length $n$ of the online instance $\{(\calM^{(t)},D^{(t)})\mid 1\leq t\leq n\}$. This assumption can be easily removed as follows. Initially, set the estimation $\hat{n}$ to be a constant. Whenever the current number of arrivals exceeds $\hat{n}$, we double $\hat{n}$ and restart the entire algorithm. Note that these restarts will only increase the final amortized recourse by a constant factor.
\end{remark}

\paragraph{Clusterings.} Given a terminal set $V$, a \emph{clustering} $\scrC$ is a partitioning of $V$ into \emph{clusters}. The \emph{trivial clustering} is the one in which each terminal forms its own singleton cluster. For two clusterings $\scrC_{1}$ and $\scrC_{2}$ (they can be clusterings of different terminal sets $V_{1}$ and $V_{2}$), we write $\scrC_{1}\preceq \scrC_{2}$ if each cluster $C_{1}\in \scrC_{1}$ is contained in some cluster $C_{2}\in \scrC_{2}$, i.e., $C_1 \subseteq C_2$. Note that if $\scrC_{1}\preceq \scrC_{2}$, there must be $V_{1}\subseteq V_{2}$.

\paragraph{Contracted Metrics.} Given an original metric $\calM = (V,\dist_{\calM})$ and a clustering $\scrC$ of $V$, consider the graph obtained by contracting each cluster $C\in\scrC$ in the graph representation of the original metric $\calM$ into a single vertex representing $C$. Note that the vertex set of this graph is exactly $\scrC$. We refer to the shortest-path metric of this graph as $\calM/\scrC=(\scrC,\dist_{\calM/\scrC})$, and denote this graph by $G_{\orig}(\calM/\scrC)$.

Sometimes we will further contract the graph $G_{\orig}(\calM/\scrC)$ by a subset $E'\subseteq E$ of original edges. That is, consider edges $e\in E'$ one by one, and for each $e=(u,v)$, contract the two vertices corresponding to $u,v$ into a single vertex (it is possible that $u,v$ correspond to the same vertex in the current graph, in which case we do nothing). We denote the resulting graph by $G_{\orig}((\calM/\scrC)/E')$, and use $\dist_{(\calM/\scrC)/E'}(\cdot,\cdot)$ to denote its shortest-path metric. The vertex set of $G_{\orig}((\calM/\scrC)/E')$ naturally corresponds to a partition of the vertex set of $G_{\orig}(\calM/\scrC)$ (which is exactly $\scrC$), so for $C_{1},C_{2}\in\scrC$, $\dist_{(\calM/\scrC)/E'}(C_{1},C_{2})$ can be defined naturally and it is unambiguous to talk about a $C_{1}$-$C_{2}$ shortest path in $G_{\orig}((\calM/\scrC)/E')$.

\section{A Clustering Procedure}
\label{sect:Clustering}

In this section, we describe a clustering procedure  which is the core part of the offline constant-approximate Steiner forest algorithm (called the \emph{timed gluttonous algorithm}) in \cite{DBLP:conf/stoc/Gupta015}.

The input of the clustering procedure is a Steiner forest instance $(\calM,D)$, and it outputs a \emph{clustering hierarchy} $\mathscr{H} = (\scrC_{0},\scrC_{1},\scrC_{2},...,\scrC_{L+1})$ of the terminal set $V$, where $L$ denotes the maximum level (which will be defined shortly). The hierarchy $\mathscr{H}$ satisfies that $\scrC_{0}$ is the trivial clustering of $V$, and for each $0\leq i\leq L$, $\scrC_{i}\preceq \scrC_{i+1}$. %

\paragraph{The Clustering Procedure.} Initially, for each terminal $v\in V$, define its \emph{level} to be
\[
\level(v) = \lceil \log_{2}\dist_{\calM}(v,u)\rceil
\]
where $u$ is the mate of $v$. Note that $\level(v)\geq 0$ since we have assumed $\dist_{\calM}(\cdot,\cdot)$ takes values at least $1$. The level of a cluster $C\subseteq V$ is $\level(C) = \max_{v\in C}\level(v)$.
Let $L = \max_{v\in V}\level(v)$ be the maximum level.

Then we iterate $i$ from $0$ to $L$. For each iteration $i$, we will construct $\scrC_{i+1}$ based on $\scrC_{i}$ as follows. 

\begin{enumerate}
\item\label{item:virtual} First, we construct an auxiliary graph $H_{i}$, called the \emph{virtual graph}, with vertices 
\[
V(H_{i}) = \{C\in \scrC_{i}\mid \level(C)\geq i\}
\]
corresponding to clusters $C\in \scrC_{i}$ with $\level(C)\geq i$, called \emph{$i$-active} clusters. Naturally, each $C\in\scrC_{i}$ with $\level(C)<i$ is an \emph{$i$-inactive} cluster. For each $C_{1},C_{2}\in V(H_{i})$, there is an edge (called a \emph{virtual edge}) in $H_{i}$ connecting them iff 
\[
\dist_{\calM / \scrC_{i}}(C_{1},C_{2})< 2^{i+1}.
\]
\item\label{item:Merging} We construct  $\scrC_{i+1}$ by \emph{contracting $H_{i}$ over $\scrC_{i}$}. That is, we first copy all $i$-inactive clusters from $\scrC_{i}$ into $\scrC_{i+1}$. Next, 
for each connected component $Q$ of $H_{i}$, 
we add a cluster $C_{i+1}$ into $\scrC_{i+1}$ which is the union of all $\scrC_{i}$-clusters in this connected component, i.e., $C_{i+1} = \bigcup_{C_{i}\in Q} C_{i}$ (note that the $C_{i+1}$ added in this way are $i$-active). Note that $\scrC_{i}\preceq \scrC_{i+1}$ by the construction.
\end{enumerate}

We note that the above clustering procedure is identical to that in Section 4.1 of \cite{DBLP:conf/stoc/Gupta015}, so we list some observations below without formal proofs. These observations will be useful when analyzing our online algorithm in \Cref{sec:our-alg}.

\begin{observation}
\label{lemma:ClustersGap}
For each $i\geq 0$ and any two different $i$-active clusters $C_{1},C_{2}\in \scrC_{i}$, we have $\dist_{\calM/\scrC_{i}}(C_{1},C_{2})\geq 2^{i}$.
\end{observation}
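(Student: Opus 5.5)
We argue by induction on $i$. For $i=0$ the claim is immediate: $\scrC_{0}$ is the trivial clustering, so $\dist_{\calM/\scrC_{0}}(C_1,C_2)=\dist_{\calM}(u,v)\geq 1=2^{0}$ for distinct singleton clusters $\{u\},\{v\}$, since all distances are at least $1$.

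For the inductive step, assume the claim holds at level $i$, and take two different $(i+1)$-active clusters $C_1,C_2\in\scrC_{i+1}$. Since $\level$ of a cluster is the maximum over its members, any cluster of $\scrC_{i+1}$ copied from an $i$-inactive cluster of $\scrC_i$ has level $<i\le i+1$, so it is $(i+1)$-inactive. Hence both $C_1$ and $C_2$ arise as unions of the clusters in connected components $Q_1\neq Q_2$ of $H_i$. Suppose for contradiction that $\dist_{\calM/\scrC_{i+1}}(C_1,C_2)<2^{i+1}$. Take a shortest $C_1$–$C_2$ path in $G_{\orig}(\calM/\scrC_{i+1})$, and trim it so that only its endpoints lie in $(i+1)$-active clusters; the trimmed path still joins two distinct such clusters and is no longer. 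Every intermediate vertex is then an $(i+1)$-inactive cluster of $\scrC_{i+1}$.

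The crux is to lift this path to $G_{\orig}(\calM/\scrC_i)$. Refining the contraction from $\scrC_{i+1}$ back to $\scrC_i$, the path becomes a sequence of original edges of total cost $<2^{i+1}$, entering and leaving clusters of $\scrC_{i+1}$. Inside each intermediate cluster we may need extra connecting length. To avoid this, we instead consider the first point where the lifted walk leaves the union of $Q_1$'s clusters, and track which $\scrC_i$-clusters it touches, checking which of these are $i$-active.

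\textbf{Main obstacle.} An $(i+1)$-inactive cluster of $\scrC_{i+1}$ can be a merged component of $i$-active clusters of level exactly $i$, so intermediate clusters may internally span long distances. If we cannot make the lift work with $2^{i+1}$ directly, we fall back on showing the weaker natural statement: a shortest path between the active clusters in $\calM/\scrC_i$ passes only through $i$-inactive clusters, which are contracted at zero cost in both metrics. Then $\dist_{\calM/\scrC_{i}}$ between some cluster of $Q_1$ and some cluster of $Q_2$ is at most the path length, which is $<2^{i+1}$. Such a pair would be joined by a virtual edge of $H_i$, contradicting $Q_1\neq Q_2$. This yields the bound $\geq 2^{i+1}$ at level $i+1$, and the tricky step is precisely justifying that the segments inside merged clusters can be shortcut at zero cost. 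The justification is that these clusters are vertices of the contracted graph, so internal travel costs nothing in $\calM/\scrC_i$ once we measure in the graph contracted by $\scrC_{i+1}$ restricted appropriately.
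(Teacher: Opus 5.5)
There is a genuine gap in the lifting step. Your base case is fine, and so is your observation that $C_1,C_2$ arise from distinct components $Q_1\neq Q_2$ of $H_i$. But the lifting is the entire content of the claim, it is not established, and the justification you end with is false. After you trim the path at $(i+1)$-active clusters, an intermediate vertex can still be an $i$-active cluster of level exactly $i$. Such a cluster is the union of a component $Q_3$ of $H_i$ made of several $\scrC_i$-clusters. Moving through it is free in $\calM/\scrC_{i+1}$ but not in $\calM/\scrC_i$. So the path need not ``pass only through $i$-inactive clusters'', and internal travel in such a cluster does not ``cost nothing in $\calM/\scrC_i$''. The length of your path therefore gives no bound on the $\calM/\scrC_i$-distance between a cluster of $Q_1$ and a cluster of $Q_2$.

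The missing idea is to cut the path at $i$-active clusters rather than $(i+1)$-active ones, and to stop aiming for $Q_2$. Let $B_0=C_1,B_1,\dots,B_m=C_2$ be the $\scrC_{i+1}$-clusters visited by the path, and let $B_j$ be the first one after $C_1$ that arises from a component of $H_i$, whichever component that is; $B_m=C_2$ qualifies, so $B_j$ exists. Every $B_k$ with $0<k<j$ is $i$-inactive, hence literally a cluster of $\scrC_i$. The original edges of the path up to $B_j$ then form a walk in $G_{\orig}(\calM/\scrC_i)$ of cost less than $2^{i+1}$. It runs from an $i$-active $\scrC_i$-cluster $A\subseteq C_1$ to an $i$-active $\scrC_i$-cluster $A'\subseteq B_j$. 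So $H_i$ contains the virtual edge $(A,A')$, contradicting that $C_1$ and $B_j$ come from different components of $H_i$. Your remark about the first point where the walk leaves $Q_1$'s clusters was heading this way, but it needs exactly this stopping rule. Note also that the induction hypothesis is never used: the bound at level $i+1$ follows from iteration $i$ alone, which is how the paper argues it.
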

The above \Cref{lemma:ClustersGap} follows
because in iteration $i-1$, we merge two $(i-1)$-active clusters in $\scrC_{i-1}$ if they are close (i.e., they have distance less than $2^{i}$ in $\calM/\scrC_{i-1}$). A formal proof can be found in Section 4 in \cite{DBLP:conf/stoc/Gupta015}.

\begin{observation}
\label{ob:TopClustering}
For each demand pair $(u,v)\in D$, there is a cluster $C\in\scrC_{L+1}$ such that $u,v\in C$.
\end{observation}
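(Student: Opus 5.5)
We argue by contradiction, using \Cref{lemma:ClustersGap} at the top level together with the fact that clusters only grow along the hierarchy. Fix a demand pair $(u,v)\in D$. Let $j=\level(u)=\level(v)=\lceil\log_2\dist_{\calM}(u,v)\rceil$; both terminals have the same level because they are mates. Then $\dist_{\calM}(u,v)\le 2^{j}$ and $j\le L$. Since $\scrC_{i}\preceq\scrC_{i+1}$ for all $i$, it suffices to show that $u$ and $v$ lie in a common cluster of $\scrC_{j+1}$; that common cluster is then contained in a single cluster of $\scrC_{L+1}$.

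\textbf{Step 1: the clusters of $u$ and $v$ are $j$-active.} Let $C_u,C_v\in\scrC_{j}$ be the clusters containing $u$ and $v$. Because the level of a cluster is the maximum level of its members, $\level(C_u)\ge\level(u)=j$ and likewise $\level(C_v)\ge j$. Hence both are $j$-active, i.e.\ vertices of the virtual graph $H_{j}$. If $C_u=C_v$ there is nothing left to prove.

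\textbf{Step 2: if $C_u\ne C_v$, they are adjacent in $H_j$.} Contraction does not increase distances, so
\[
\dist_{\calM/\scrC_{j}}(C_u,C_v)\le\dist_{\calM}(u,v)\le 2^{j}<2^{j+1}.
\]
Thus $H_j$ contains a virtual edge between $C_u$ and $C_v$. They therefore lie in the same connected component of $H_j$, and contracting $H_j$ over $\scrC_j$ puts $u$ and $v$ into one cluster of $\scrC_{j+1}$.

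Note that Step 2 alone already gives the claim, without invoking \Cref{lemma:ClustersGap}. For a consistency check, one can alternatively apply \Cref{lemma:ClustersGap} at level $j+1$: if $u$ and $v$ were separated in $\scrC_{j+1}$, their clusters would be $(j+1)$-active only when they have level at least $j+1$, which Step 2 already rules out.

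The only point needing care is the exact threshold: the virtual-edge condition is strict ($<2^{j+1}$), while the ceiling only gives $\dist_{\calM}(u,v)\le 2^j$. This suffices with room to spare, so I do not expect any step to be a genuine obstacle.
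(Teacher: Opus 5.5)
Your Steps 1 and 2 are correct and are exactly the paper's argument: at iteration $j=\level(u)$ the clusters of $u$ and $v$ are still $j$-active and at contracted distance at most $\dist_{\calM}(u,v)\le 2^{j}<2^{j+1}$, so they merge into one cluster of $\scrC_{j+1}$, and $\scrC_i\preceq\scrC_{i+1}$ carries this up to $\scrC_{L+1}$. The opening framing as a contradiction via \Cref{lemma:ClustersGap} and the closing ``consistency check'' play no role (the latter is muddled), so you should simply drop them.
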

This is basically because $\dist_{\calM}(u,v)<2^{\level(u)+1}$ by the definitions of $\level(\cdot)$ and $L$. Then at the beginning of iteration $i=\level(u)$, if $u$ and $v$ are still inside different clusters $C_{u},C_{v}\in \scrC_{i}$, then $C_{u}$ and $C_{v}$ are still $i$-active and they will be merged in this iteration.

\paragraph{The Forest-Forming Procedure.} For better understanding, we briefly describe a procedure forming a Steiner forest solution based on the clustering hierarchy (which is actually straightforward). However, we are not going to formally show the feasibility and approximation of this procedure, since the forest-forming procedure in our online algorithm is specialized. 

We initialize an empty solution $F$. For each $i$ from $0$ to $L$, we select an arbitrary \emph{virtual spanning forest} $\hat{F}_{i}$ of the virtual graph $H_{i}$, and then for each virtual edge $\hat{e}\in \hat{F}_{i}$ connecting two clusters $C_{1},C_{2}\in\scrC_{i}$, add into $F$ \underline{the original edges on} a $C_{1}$-$C_{2}$ shortest path $P$ in the graph $G_{\orig}(\calM/\scrC_{i})$. Note that the path $P$ has cost at most $2^{i+1}$, as guaranteed by the definition of virtual edges in Step~\ref{item:virtual}.

The feasibility of the solution $F$ basically comes from \Cref{ob:TopClustering}, and the approximation is given by the following \Cref{lemma:OfflineCost}, which will also be useful when analyzing the approximation of our online algorithm in \Cref{sec:our-alg}. \Cref{lemma:OfflineCost} is proved almost explicitly\footnote{The $m_{i}$ in their proof is exactly $|\hat{F}_{i}|$ in our context. They show that $\sum_{i}m_{i}2^{i+1}\leq O(1)\cdot\OPT$. Combining this with $|\hat{F}_{i}| = |\scrC_{i}| - |\scrC_{i+1}|$ gives \Cref{lemma:OfflineCost}.} in the proof of Theorem 4.2 in \cite{DBLP:conf/stoc/Gupta015}.

\begin{lemma}[\cite{DBLP:conf/stoc/Gupta015}]
\label{lemma:OfflineCost}
Let $\OPT$ be the cost of an optimal solution to the instance $(\calM,D)$. Then
\[
\sum_{i=0}^{L} (|\scrC_{i}|-|\scrC_{i+1}|)\cdot 2^{i+1}\leq O(1)\cdot\OPT \,.
\]
\end{lemma}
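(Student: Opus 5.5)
The plan is a dual-fitting (ball-growing) argument in the style of the timed gluttonous analysis. First rewrite the left-hand side. Since $|\scrC_i|-|\scrC_{i+1}| = |V(H_i)| - c(H_i)$, where $c(H_i)$ is the number of connected components of $H_i$, this quantity is the number of edges $|\hat F_i|$ in a spanning forest of $H_i$. Let $a_i = |V(H_i)|$ be the number of $i$-active clusters. Every component of $H_i$ that actually merges has at least two vertices, so $|\hat F_i| \le a_i - c(H_i) \le a_i$. In addition, the number of clusters involved in nontrivial merges at level $i$ is at most $2|\hat F_i|$. It therefore suffices to bound $\sum_i |\hat F_i|\, 2^{i}$ by $O(\OPT)$. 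I will do this by exhibiting a feasible solution to the standard Steiner forest dual (moat-packing) LP of value $\Omega\bigl(\sum_i |\hat F_i| 2^i\bigr)$, and then invoke weak duality.

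To build the dual, I grow moats in phases. At level $i$, each $i$-active cluster $C\in\scrC_i$ that participates in a merge at level $i$ receives a ball of radius $\Theta(2^{i})$, say $2^{i-2}$, measured in the contracted metric $\calM/\scrC_i$. Take the moat to be the set of terminals whose $\calM/\scrC_i$-distance to $C$ lies in a band such as $[0,2^{i-2})$, layered on top of the region already contracted into $C$. By \Cref{lemma:ClustersGap}, distinct $i$-active clusters are at distance at least $2^i$ in $\calM/\scrC_i$. Hence these balls are pairwise disjoint within a level, and they are also separated from the earlier levels because all earlier growth lies inside the contracted clusters.

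Next I check that each moat is a valid dual variable, i.e.\ that it separates some demand pair. An $i$-active cluster $C$ contains a terminal $v$ with $\level(v)\ge i$, so the mate of $v$ is at distance at least $2^{\level(v)-1}\ge 2^{i-1}$ from $v$. I would like the mate to lie outside the moat. This needs care: the mate might already belong to $C$ itself, and then the moat is not separating. For this step I would follow Gupta--Kumar's handling. Charge only to clusters that still have an unsatisfied terminal, meaning one whose mate is outside the cluster. Then argue that a merged component at level $i$ contains at least one such cluster, or else that its merge cost can be charged, within a constant factor, to the clusters of the component that do have such terminals.

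Summing, the dual value is $\sum_i (\#\text{charged clusters at level } i)\cdot 2^{i-2}$, which is $\Omega\bigl(\sum_i |\hat F_i| 2^{i}\bigr)$. Feasibility requires that every edge $e$ is covered by moats of total width at most $\cost(e)$. Within a single level, each edge meets at most two balls, which are disjoint and in the contracted metric, so it receives width at most its length inside the band. Across levels, the bands are stacked concentrically around nested clusters. An edge that crosses at level $i$ is crossing a region not yet contracted, so its portions in different annuli are disjoint pieces of its length.

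The main obstacle is exactly the issue above: making the moats genuinely separate demand pairs while still paying for every merge. In particular, the difficult case is an active cluster whose active terminals all have mates inside it, since it is active only because of its level. I expect to resolve it as Gupta--Kumar do, by redefining the activity time in terms of the timed levels, so that any cluster still active at time $2^{i}$ contains a terminal whose mate is farther away than the radius grown so far. The annulus-disjointness check across levels I expect to be routine.
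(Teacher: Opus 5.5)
Your opening reduction matches the paper's. Since $i$-inactive clusters are copied unchanged, $|\scrC_i|-|\scrC_{i+1}|=|V(H_i)|-c(H_i)=|\hat F_i|$. The paper then imports the bound $\sum_i|\hat F_i|\,2^{i+1}\le O(\OPT)$ from the proof of Theorem~4.2 in \cite{DBLP:conf/stoc/Gupta015}.

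Your attempt to reprove that bound has a genuine gap at exactly the step you flag, and the repair you sketch does not work. In this procedure a cluster is $i$-active merely because it contains a terminal of level at least $i$, whether or not that terminal's mate is already inside it. Such satisfied-but-active clusters really do get merged with one another. For example, take $N$ parallel chains, each consisting of a pair $(u,v)$ with $\dist(u,v)=m$ and unit-spaced level-$0$ pairs filling the path between them, with consecutive chains at distance $D\approx m/2$. At level $0$ each chain collapses into a single cluster that contains all mates of its terminals, yet it stays active up to level $\lceil\log_2 m\rceil$. At level $\lfloor\log_2 D\rfloor$ the $N$ chain clusters merge, contributing $\Theta(Nm)=\Theta(\OPT)$ to the left-hand side. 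No moat around any of these clusters separates a demand pair, so they carry no value in the standard Steiner forest dual. Your fallback, charging to clusters of the same component that have an unsatisfied terminal, has nothing to charge to. You also cannot ``redefine the activity time'': the lemma is about this fixed procedure, whose activity is already the timed one. The property you hope for (every cluster active at level $i$ has a terminal whose mate lies outside its moat) is simply false here.

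The missing idea is \emph{faithfulness}, Theorem~4.1 of \cite{DBLP:conf/stoc/Gupta015}, restated in the paper as \Cref{lemma:LPsolution}: some solution of cost $O(\OPT)$ has every cluster of $\scrC_{L+1}$ inside a single component. This lets you fit the dual against the LP whose constraints are cuts separating some cluster of $\scrC_{L+1}$, instead of some demand pair. Take a moat of radius below $2^{i-1}$ in $\calM/\scrC_i$ around a cluster merged at level $i$. It misses the merge partner, which is at distance at least $2^i$ by \Cref{lemma:ClustersGap}. So it separates their common top-level cluster and is a legitimate dual variable no matter where the mates lie. The paper's own proof of \Cref{lemma:OnlineCost} uses exactly this device. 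Faithfulness is a nontrivial structural theorem in its own right and does not fall out of moat growing.

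Separately, your cross-level feasibility argument is not routine as written. Only clusters are contracted, not moats, so earlier moats protrude beyond the $\scrC_i$-clusters. With bands $[0,2^{i-2})$, an edge joining an inactive cluster to a cluster that keeps merging could be charged nearly its full length at every level. You need genuine annuli that enclose the earlier moats (e.g.\ radii in $[2^{i-3},2^{i-2})$ in $\calM/\scrC_i$), together with an actual per-edge load bound.
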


Let us explain \Cref{lemma:OfflineCost} somewhat further. The left-hand side takes a summation over all levels $i$. For each level $i$, observe that  $|\scrC_{i}| - |\scrC_{i+1}| = |\hat{F}_{i}|$. Hence the inequality basically says that if we assign a budget of $2^{i+1}$ to each virtual edge in $\hat{F}_{i}$, then the total budget is within a constant factor of $\OPT$. Indeed, the cost of the original edges added into the real solution $F$ by each virtual edge is within the budget.

\section{Our Online Algorithm} \label{sec:our-alg}

In this section, we will present our online Steiner forest algorithm with low recourse, and prove \Cref{thm:FullMain}. 

\begin{theorem}
\label{thm:FullMain}
    There is an algorithm for online Steiner forest
    that, over the course of $n$ arrivals
    of terminal pairs $(u_i,v_i)$
    in a metric space,
    maintains an $O(\log(n)/\lambda)$-competitive solution
    while inserting and deleting at most $O(n \lambda)$ edges, for any $\lambda\in[1,\log n]$.
\end{theorem}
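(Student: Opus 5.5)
We will follow the two-step approach from the technical overview: simulate the timed gluttonous clustering of \Cref{sect:Clustering} afresh at every arrival, and use \emph{pinning} to control recourse. At arrival $t$ we run the clustering procedure on $(\calM^{(t)},D^{(t)})$. This yields a hierarchy $\mathscr{H}^{(t)}$ with virtual graphs $H_i^{(t)}$. Among the admissible orders of level-$i$ merges we prefer those that already occurred at arrival $t-1$; concretely, we take the virtual spanning forest $\hat F_i^{(t)}$ to contain as many virtual edges (cluster pairs) of $\hat F_i^{(t-1)}$ as possible.

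For each virtual edge in $\hat F_i^{(t)}$ we buy original edges along a shortest path. The graph we use is $G_{\orig}((\calM^{(t)}/\scrC_i^{(t)})/P)$, where $P$ is the current set of pinned edges. Contracting by $P$ makes pinned edges free, so the path cost is still at most $2^{i+1}$. Whenever a virtual edge is \emph{new}, meaning it was not present at time $t-1$, we pin the cheapest $\lceil|\text{path}|/\lambda\rceil$ edges of the bought path (or a suitable $1/\lambda$ fraction). The solution $F^{(t)}$ is the union of $P$ with the paths bought for the current forests $\hat F_i^{(t)}$. Paths for virtual edges that survive from $t-1$ are kept unchanged.

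Feasibility follows from \Cref{ob:TopClustering}. The cost of the non-pinned part is $O(\OPT^{(t)})$ by \Cref{lemma:OfflineCost}. For recourse, every edge inserted is either pinned or lies on a path of a new virtual edge. A new path with $k$ edges pins at least $k/\lambda$ of them. Contracting by $P$ before computing paths ensures that each newly pinned edge joins two distinct components of $P$, so $P$ stays acyclic and $|P|\le 2n$. Hence the total number of insertions is $O(\lambda n)$, and deletions are bounded by insertions.

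It remains to bound the cost of $P$. A new path of cost at most $2^{i+1}$ contributes at most $2^{i+1}/\lambda$ of pinned cost, since we pin its cheapest fraction. So it suffices to show that the total budget $\sum_t \sum_i 2^{i+1}\cdot\#\{\text{new virtual edges at level } i \text{ at time } t\}$ is $O(\log n\cdot \OPT)$. Dividing by $\lambda$ then gives pinned cost $O(\log(n)/\lambda\cdot\OPT)$, which together with the offline part yields the claimed ratio.

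The main obstacle is this budget bound. We would first show that at a fixed level $i$ the cumulative number of new level-$i$ merges over all arrivals is at most $O(\OPT/2^{i})$, by dual fitting. Each new merge at level $i$ should be charged to a source terminal around which we grow a ball of radius $\Theta(2^i)$. By \Cref{lemma:ClustersGap}, active clusters are $2^i$-separated, which suggests these balls can be made disjoint and each separates some demand pair, giving a feasible dual for the Steiner forest LP. The subtle part is choosing the sources consistently across arrivals. Because we prefer previously existing merges, the clustering changes between $t-1$ and $t$ should be "monotone" ($\scrC_i^{(t-1)}\preceq\scrC_i^{(t)}$ up to the new terminals), so a new merge must consume some previously separated active cluster; we would charge that cluster and retire its source.

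Finally, only $O(\log n)$ levels matter. Levels $i$ with $2^i<\OPT/n^2$ contribute $O(\OPT)$ in total, since there are at most $n$ merges per arrival. Levels with $2^{i}>\OPT$ admit no new merges beyond the first, because a merge there would require a distance exceeding $\OPT$. Summing over the remaining $O(\log n)$ levels gives $O(\log n\cdot\OPT)$.
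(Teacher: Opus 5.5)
The pinning rule as you state it cannot deliver both bounds at once. With $\lceil k/\lambda\rceil$, a new path with $k<\lambda$ edges still pins one edge. That edge's cost is bounded only by $\cost(\text{path})/k$, not by $2^{i+1}/\lambda$, and for $k=1$ the whole path is pinned. In a terminal-only metric, single-edge paths (the direct edge between two contracted clusters) are common. If every new path is a single edge, your algorithm never deletes anything. The pinned cost is then bounded only by the full $O(\log n\cdot\OPT)$ budget, so you get an $O(\log n)$ ratio instead of $O(\log(n)/\lambda)$, and in particular not the $O(1)$ of \Cref{thm:main}.

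With $\lfloor k/\lambda\rfloor$ instead, short paths pin nothing, and your claim ``a new path with $k$ edges pins at least $k/\lambda$ of them'' fails. Nothing then pays for their insertions, and nothing in your argument bounds the total number of new virtual edges over all arrivals and levels by $O(n)$.

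No per-path rule fixes this; the missing idea is to aggregate across paths. The paper keeps a buffer multiset $B$. Edges of new paths with fewer than $\lambda$ edges accumulate in $B$. Once $|B|\ge\lambda$, the single cheapest edge of $B$ is pinned, at cost at most $\cost(B)/\lambda$, and $B$ is emptied. $B$ is also emptied after every batch pin and at the start of each arrival. Because the pinned set has not changed since the buffered paths were computed, the newly pinned edge still joins two components of the pinned forest. Each pin pays for $O(\lambda)$ inserted edges, and the leftover buffer per arrival has fewer than $\lambda$ edges. This gives $O(n\lambda)$ recourse, while the pinned cost stays a $1/\lambda$ fraction of the budget.

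Second, the per-level budget bound, which you rightly call the main obstacle, is still only a heuristic, and it cites the wrong monotonicity. The relation $\scrC^{(t-1)}_i\preceq\scrC^{(t)}_i$ holds no matter which spanning forests are chosen, so it cannot limit new merges. What preferring inherited edges buys is the cross-level relation $\scrC^{(t-1)}_{i+1}\preceq\scrC^{(t)}_{\inh,i}$ (\Cref{lemma:ClusteringRelationInherit}).

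The proof then runs as follows.
\begin{itemize}
\item The number of new level-$i$ edges at arrival $t$ equals the number of $i$-active $\scrC^{(t)}_{\inh,i}$-clusters minus the number of $i$-active $\scrC^{(t)}_{i+1}$-clusters.
\item By the cross-level relation, each $i$-active $\scrC^{(t)}_{\inh,i}$-cluster either contains an $i$-active $\scrC^{(t-1)}_{i+1}$-cluster, or its only level-$\ge i$ vertices are the new terminals.
\item One maintains sources that are level-$\ge i$ terminals, pairwise at distance $\ge 2^i$, with an unretired source in every $i$-active $\scrC^{(t)}_{i+1}$-cluster.
\item A new terminal becomes a source only if its $\scrC^{(t)}_{\inh,i}$-cluster has no earlier level-$\ge i$ terminal. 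Its separation from \emph{older} sources comes from \Cref{lemma:ClustersGap} applied at the current arrival, not from separation at the times those sources were chosen.
\item When $k$ active $\scrC^{(t)}_{\inh,i}$-clusters merge, $k-1$ spare sources are retired, and balls of radius $2^{i-1}$ are grown around the retired sources.
\end{itemize}
The resulting dual is feasible: a level-$\ge i$ source has its mate at distance $>2^{i-1}$, so every ball cut separates a demand pair. Alternatively, use the faithfulness lemma as the paper does.

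Finally, replace your high-level claim (``no new merges beyond the first''). The correct statement is that no cluster is $i$-active once $2^{i-1}\ge\OPT$. Together with your valid bound for levels with $2^i<\OPT/n^2$, this leaves $O(\log n)$ relevant levels.
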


Throughout this section, $\lambda$ is the tradeoff parameter in \Cref{thm:FullMain}, and \Cref{thm:main} follows \Cref{thm:FullMain} by setting $\lambda = \log n$.

We start with \Cref{sect:ApplyingClusteringProcedure}, in which we will apply the clustering procedure from \Cref{sect:Clustering} for each arrival and introduce some notation and observations regarding the clustering hierarchies. Next, in \Cref{sect:OnlineSolution}, we define the online Steiner forest solution $F$ algorithmically by, for each arrival $t$, constructing the \emph{snapshot} $F^{(t)}$ of $F$ based on the current clustering hierarchy. Finally, \Cref{thm:FullMain} follows from the analyses of feasibility, recourse, and approximation in \Cref{sect:Feasibility}, \Cref{sect:Recourse}, and \Cref{sect:Approximation}, respectively. 

For convenience, we provide an outline of the full online algorithm as \Cref{algo:main} (page~\pageref{algo:main}). The details of each step are still described in the text of \Cref{sect:OnlineSolution}.

\subsection{Applying the Clustering Procedure}
\label{sect:ApplyingClusteringProcedure}

We start with some notation. Let $\{(\calM^{(t)},D^{(t)})\mid 1\leq t\leq n\}$ be the given online instance. For each arrival $t$, we run the clustering procedure for the (offline) instance $(\calM^{(t)},D^{(t)})$.
Here we will write the corresponding variables $L$ (the maximum level), $H_{i}$ (the virtual graph at level $i$), $\scrC_{i}$ (the clustering at level $i$),
and $\mathscr{H}$ (the clustering hierarchy)
with a superscript $(t)$. To avoid clutter, for each arrival $t\geq 1$ and $i\geq L^{(t)}+1$, we let $\scrC^{(t)}_{i}$ be the same as the top-level clustering $\scrC^{(t)}_{L^{(t)}+1}$, and let $H^{(t)}_{i}$ be an empty virtual graph. Furthermore, for $t=0$, we define $L^{(0)} = 0$ and each $\scrC^{(0)}_{i}$ and $H^{(0)}_{i}$ to be empty. The \Cref{ob:ClusteringRelationSameLevel} below will be useful later.

\begin{lemma}
\label{ob:ClusteringRelationSameLevel}
For each $t\geq 1$ and $0\leq i\leq L^{(t)}+1$, we have $\scrC^{(t-1)}_{i}\preceq \scrC^{(t)}_{i}$.
\end{lemma}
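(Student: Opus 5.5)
The plan is to prove the statement by induction on the level $i$, for a fixed $t\geq 1$. We use two monotonicity facts about going from arrival $t-1$ to arrival $t$.

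First, levels of old terminals do not change. A terminal $v\in V^{(t-1)}$ has the same mate and the same $\dist_{\calM}$ to it at both arrivals, so $\level(v)$ is identical. Consequently, for clusters $C\subseteq C'$ we get $\level(C')\geq\level(C)$, and $L^{(t-1)}\leq L^{(t)}$.

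Second, contracted distances can only shrink. Let $C_1,C_2\in\scrC^{(t-1)}_i$ be contained in $C_1',C_2'\in\scrC^{(t)}_i$, and suppose $\scrC^{(t-1)}_i\preceq\scrC^{(t)}_i$. Then
\[
\dist_{\calM^{(t)}/\scrC^{(t)}_i}(C_1',C_2')\leq \dist_{\calM^{(t-1)}/\scrC^{(t-1)}_i}(C_1,C_2).
\]
The reason is that every original edge of $\calM^{(t-1)}$ is also an edge of $\calM^{(t)}$ with the same cost, since $\calM^{(t-1)}$ is a submetric. Also, the vertex map $C\mapsto C'$ sends contracted vertices to contracted vertices. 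So any path in $G_{\orig}(\calM^{(t-1)}/\scrC^{(t-1)}_i)$ maps to a walk of no larger cost in $G_{\orig}(\calM^{(t)}/\scrC^{(t)}_i)$, possibly with some edges collapsing to loops.

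The induction runs as follows. The case $t=1$ is trivial because $\scrC^{(0)}_i$ is empty. For the base case $i=0$, both clusterings are trivial and $V^{(t-1)}\subseteq V^{(t)}$, so the claim holds.

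For the step from $i$ to $i+1$, assume $\scrC^{(t-1)}_i\preceq\scrC^{(t)}_i$ and take a cluster $C\in\scrC^{(t-1)}_{i+1}$. There are two cases.

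If $C$ was copied as an $i$-inactive cluster, then $C\in\scrC^{(t-1)}_i$. By the induction hypothesis it lies in some cluster of $\scrC^{(t)}_i$, which in turn lies in a cluster of $\scrC^{(t)}_{i+1}$ because $\scrC^{(t)}_i\preceq\scrC^{(t)}_{i+1}$.

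Otherwise $C$ is the union of a connected component $Q$ of $H^{(t-1)}_i$. Each $C_j\in Q$ has $\level(C_j)\geq i$. Its container $C_j'\in\scrC^{(t)}_i$ then has level at least $i$ by the first fact, so $C_j'$ is a vertex of $H^{(t)}_i$. For any virtual edge $C_1C_2$ of $H^{(t-1)}_i$, either $C_1'=C_2'$, or by the second fact $\dist_{\calM^{(t)}/\scrC^{(t)}_i}(C_1',C_2')<2^{i+1}$, so $C_1'C_2'$ is a virtual edge of $H^{(t)}_i$. Hence all the containers $C_j'$ lie in one component of $H^{(t)}_i$. That component is merged into a single cluster of $\scrC^{(t)}_{i+1}$, which contains $C$.

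The main point needing care is the bookkeeping at the top levels, where the conventions for $i>L^{(t-1)}+1$ set $\scrC^{(t-1)}_i$ equal to the top clustering and $H^{(t-1)}_i$ empty. I would handle it by noting that running the procedure beyond $L^{(t-1)}$ makes every cluster inactive, since every cluster has level at most $L^{(t-1)}<i$. So the procedure just copies the clustering, which agrees with the convention. With this, the inductive step applies uniformly for all $i$ up to $L^{(t)}$, giving the claim for all $0\leq i\leq L^{(t)}+1$. The distance-monotonicity step is the only substantive one, and it follows directly from the submetric and coarser-contraction structure.
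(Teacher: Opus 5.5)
Your proof is correct and follows the same route as the paper: induction on $i$, where each virtual edge of $H^{(t-1)}_i$ either maps to a virtual edge of $H^{(t)}_i$ between the containing clusters or collapses, so each component of $H^{(t-1)}_i$ lands inside a single component of $H^{(t)}_i$. You also spell out several points the paper leaves implicit: the copied $i$-inactive clusters, the fact that the containing clusters are $i$-active because terminal levels never change, and consistency with the top-level conventions.
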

\begin{proof}
    We prove this by induction on $i$. Since $\scrC^{(t-1)}_{0}$ is the trivial clustering on the terminal set $V^{(t - 1)}$ and $\scrC^{(t)}_{0}$ is the trivial clustering on the terminal set $V^{(t)}$, the base case is true. Now assume for some fixed $i$, $\scrC^{(t-1)}_{i}\preceq \scrC^{(t)}_{i}$. We will show $\scrC^{(t-1)}_{i + 1}\preceq \scrC^{(t)}_{i + 1}$. In other words, our goal is to show that a cluster $C \in \scrC^{(t-1)}_{i + 1}$ is fully contained in some cluster of $\scrC^{(t)}_{i + 1}$. To this end, we consider the connected components of $H_i^{(t-1)}$ and the connected components of $H_i^{(t)}$. Note that if some two clusters $C_1, C_2 \in \scrC^{(t-1)}_{i}$ have 
    \[ \dist_{\calM / \scrC^{(t - 1)}_{i}}(C_{1},C_{2})< 2^{i+1}\]
    then, since $\scrC^{(t-1)}_{i}\preceq \scrC^{(t)}_{i}$, we also have that for the parent clusters $D_1, D_2 \in \scrC^{(t)}_{i}$ which contain $C_1$ and $C_2$ respectively,
    \[ \dist_{\calM / \scrC^{(t)}_{i}}(D_{1},D_{2})< 2^{i+1}.\]
    Therefore, an edge between $C_1$ and $C_2$ in $H_i^{(t-1)}$ is also an edge between $D_1$ and $D_2$ in $H_i^{(t)}$. So, if a cluster $C \in \scrC^{(t-1)}_{i + 1}$ results from some connected component $Q$ in $H_i^{(t-1)}$ consisting of $C_1, \hdots, C_k$, then there is a corresponding connected component $Q'$ of $D_1, \hdots, D_k$ in $H_i^{(t)}$ (note that the $D_\ell$ are not necessarily distinct). This proves our inductive step. 
\end{proof}

\subsection{Constructing the Snapshots of the Online Solution}
\label{sect:OnlineSolution}

Consider an arrival $1\leq t\leq n$. We construct the snapshot $F^{(t)}$ of the online solution in two phases.

\paragraph{The First Phase.} We first construct a \emph{virtual solution} $\hat{F}^{(t)}$ using virtual edges in the virtual graphs $H^{(t)}_{i}$ for all $i$. Similarly to the forest-forming procedure in \Cref{sect:Clustering}, for each level $0\leq i\leq L^{(t)}$, we will pick a virtual spanning forest $\hat{F}^{(t)}_{i}$ of $H^{(t)}_{i}$. However, rather than choosing an arbitrary forest as $\hat{F}^{(t)}_{i}$, we will choose $\hat{F}^{(t)}_{i}$ more carefully, as we discuss shortly. The virtual solution $\hat{F}^{(t)} = \bigcup_{0\leq i\leq L^{(i)}} \hat{F}^{(t)}_{i}$ is simply the union of virtual spanning forests at all levels.

\medskip

\noindent\underline{Inheritable and Inherited Virtual Edges.} Before describing how to choose $\hat{F}^{(i)}_{t}$, we need to introduce the concepts of \emph{inheritable/non-inheritable} and \emph{inherited/non-inherited} virtual edges.

For each virtual edge $\hat{e}^{(t-1)} = (C_{1},C_{2})\in \hat{F}^{(t-1)}_{i}$,
it is \emph{inheritable} if $C_{1},C_{2}\in \scrC^{(t-1)}_{i}$ are not contained in the same $\scrC^{(t)}_{i}$-cluster (recall that $\scrC^{(t-1)}_{i}\preceq \scrC^{(t)}_{i}$ from \Cref{ob:ClusteringRelationSameLevel}), otherwise it is \emph{non-inheritable}.

\begin{lemma}
\label{lemma:VirtualEdgeMapping}
For each inheritable virtual edge $\hat{e}^{(t-1)} = (C_{1},C_{2})\in \hat{F}^{(t-1)}_{i}$, there is a virtual edge $\hat{e}^{(t)}\in H^{(t)}_{i}$ connecting $D_1, D_2 \in \scrC^{(t)}_{i}$, the two different $\scrC^{(t)}_{i}$-clusters containing $C_{1},C_{2}$ respectively. 
\end{lemma}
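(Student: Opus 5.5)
The plan is to follow the same pattern as the proof of \Cref{ob:ClusteringRelationSameLevel}. Fix an inheritable virtual edge $\hat{e}^{(t-1)}=(C_1,C_2)\in\hat{F}^{(t-1)}_i\subseteq H^{(t-1)}_i$. By \Cref{ob:ClusteringRelationSameLevel}, $\scrC^{(t-1)}_i\preceq\scrC^{(t)}_i$, so there are clusters $D_1,D_2\in\scrC^{(t)}_i$ with $C_1\subseteq D_1$ and $C_2\subseteq D_2$. By the definition of inheritable, $D_1\neq D_2$. It then remains to check two things: that $D_1,D_2$ are both vertices of $H^{(t)}_i$, i.e.\ $i$-active, and that $\dist_{\calM/\scrC^{(t)}_i}(D_1,D_2)<2^{i+1}$.

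\emph{Distance.} Since $\hat{e}^{(t-1)}\in H^{(t-1)}_i$, we have $\dist_{\calM/\scrC^{(t-1)}_i}(C_1,C_2)<2^{i+1}$. So there is a path in $G_{\orig}(\calM^{(t-1)}/\scrC^{(t-1)}_i)$ of cost less than $2^{i+1}$. Its original edges lie among terminals of $V^{(t-1)}\subseteq V^{(t)}$, and $\calM^{(t-1)}$ is the submetric of $\calM^{(t)}$, so the same original edges exist with the same costs. Each $\scrC^{(t-1)}_i$-cluster lies inside a $\scrC^{(t)}_i$-cluster, so contracting by the coarser clustering maps this walk to a $D_1$--$D_2$ walk of no greater cost. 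Hence $\dist_{\calM/\scrC^{(t)}_i}(D_1,D_2)<2^{i+1}$, exactly as argued in the proof of \Cref{ob:ClusteringRelationSameLevel}.

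\emph{Activity.} This is the step needing a little care, since levels are defined via mates, but it is easy. $C_1$ is $i$-active at time $t-1$, so some $v\in C_1$ has level at least $i$, computed as $\lceil\log_2\dist(v,u)\rceil$ with $u$ the mate of $v$. The mate of $v$ is fixed by the demand pair and the distance is unchanged in $\calM^{(t)}$. So $\level(v)$ is the same at time $t$, and therefore $\level(D_1)\geq\level(v)\geq i$. The same holds for $D_2$.

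Both conditions hold, so $D_1,D_2\in V(H^{(t)}_i)$ are joined by a virtual edge $\hat{e}^{(t)}$ of $H^{(t)}_i$. One boundary case remains: if $i>L^{(t)}$, then $H^{(t)}_i$ is defined to be empty. However, $i\leq L^{(t-1)}\leq L^{(t)}$, because levels of existing terminals are unchanged, so this case does not arise.
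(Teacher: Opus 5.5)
Your proof is correct and follows essentially the same route as the paper's. Inheritability gives distinct $D_1,D_2$, and the coarsening $\scrC^{(t-1)}_i\preceq\scrC^{(t)}_i$ transfers the bound $\dist_{\calM/\scrC^{(t-1)}_i}(C_1,C_2)<2^{i+1}$ to $\dist_{\calM/\scrC^{(t)}_i}(D_1,D_2)<2^{i+1}$; your explicit checks that $D_1,D_2$ are $i$-active and that $i\le L^{(t-1)}\le L^{(t)}$ fill in details the paper leaves implicit.
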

\begin{proof}
    In the same vein as the proof above, note that if $\hat{e}^{(t-1)} = (C_{1},C_{2})$ is an inheritable virtual edge in $H^{(t-1)}_{i}$, then: (1) $C_{1},C_{2}\in \scrC^{(t-1)}_{i}$ are contained in distinct clusters $D_1$ and $D_2$ in $\scrC^{(t)}_{i}$, and (2) $\dist_{\calM / \scrC^{(t - 1)}_{i}}(C_{1},C_{2})< 2^{i+1}$. The second property implies $\dist_{\calM / \scrC^{(t)}_{i}}(D_{1},D_{2})< 2^{i+1}$, and therefore we will have an edge $\hat{e}^{(t)}\in H^{(t)}_{i}$ connecting $D_1$ and $D_2$.
\end{proof}

\Cref{lemma:VirtualEdgeMapping} naturally defines a mapping $\pi^{(t)}_{i}$ from the inheritable virtual edges in $\hat{F}^{(t-1)}_{i}$ to virtual edges in $H^{(t)}_{i}$. We define the \emph{inherited} virtual edges in $H^{(t)}_{i}$ to be the image set of $\pi^{(t)}_{i}$, and will call the other virtual edges in $H^{(t)}_{i}$ \emph{non-inherited}. For each inherited virtual edge $\hat{e}^{(t)}\in H^{(t)}_{i}$, we fix an arbitrary inheritable virtual edge $\hat{e}^{(t-1)}\in \hat{F}^{(t-1)}_{i}$ with $\pi^{(t)}_{i}(\hat{e}^{(t-1)}) = \hat{e}^{(t)}$ as the \emph{parent} of $\hat{e}^{(t)}$, and say that $\hat{e}^{(t)}$ is inherited from its parent $\hat{e}^{(t-1)}$.

\medskip

\noindent\underline{Choosing $\hat{F}^{(i)}_{t}$.} Now, we pick the virtual spanning forest $\hat{F}^{(t)}_{i}$ giving priority to the inherited virtual edges in $H^{(t)}_{i}$. Formally speaking, we first pick an arbitrary spanning forest $\hat{F}^{(t)}_{\inh,i}$ in $H^{(t)}_{\inh,i}$, the subgraph of $H^{(t)}_{i}$ induced by the inherited virtual edges. Then we arbitrarily augment $\hat{F}^{(t)}_{\inh,i}$ to be 
a spanning forest $\hat{F}^{(t)}_{i}$ of $H^{(t)}_{i}$.

For better understanding, we emphasize that virtual edges in $\hat{F}^{(t)}_{i}$ are classified in two ways: inheritable vs.~non-inheritable, and inherited ($\hat{F}^{(t)}_{\inh,i}$) vs.~non-inherited ($\hat{F}^{(t)}_{i}\setminus \hat{F}^{(t)}_{\inh,i}$). Each inherited virtual edge in $\hat{F}^{(t)}_{i}$ is inherited from its parent, some inheritable virtual edge in $\hat{F}^{(t-1)}_{i}$. However, not every inheritable virtual edge in $\hat{F}^{(t-1)}_{i}$ may serve as the parent of an inherited virtual edge in $\hat{F}^{(t)}_{i}$.

Note that $\scrC^{(t)}_{i+1}$ is exactly the clustering obtained by contracting $\hat{F}^{(t)}_{i}$ over $\scrC^{(t)}_{i}$ (the meaning of contracting is the same as in Step \ref{item:Merging} of the clustering procedure), and we let $\scrC^{(t)}_{\inh,i}$ denote the clustering obtained by contracting $\hat{F}_{\inh,i}$ over $\scrC^{(t)}_{i}$. Then we have the following \Cref{lemma:ClusteringRelationInherit} which will be useful later.

\begin{lemma}
\label{lemma:ClusteringRelationInherit}
$\scrC^{(t-1)}_{i+1}\preceq \scrC^{(t)}_{\inh,i}$.
\end{lemma}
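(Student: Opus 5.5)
The plan is to track each cluster of $\scrC^{(t-1)}_{i+1}$ through the level-$i$ construction and show that the inherited forest at time $t$ already glues together everything it contains. Fix $C\in\scrC^{(t-1)}_{i+1}$; we want a single cluster of $\scrC^{(t)}_{\inh,i}$ containing $C$. There are two cases, according to whether $C$ is $i$-inactive or $i$-active at time $t-1$.

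\emph{Case 1: $C$ is $i$-inactive.} Then $C$ is copied unchanged, so $C\in\scrC^{(t-1)}_{i}$. By \Cref{ob:ClusteringRelationSameLevel}, $C$ lies inside some cluster of $\scrC^{(t)}_{i}$. Since $\scrC^{(t)}_{i}\preceq\scrC^{(t)}_{\inh,i}$ (contracting can only coarsen), $C$ lies inside a cluster of $\scrC^{(t)}_{\inh,i}$.

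\emph{Case 2: $C$ is $i$-active.} Then $C$ is the union of the $\scrC^{(t-1)}_i$-clusters $C_1,\dots,C_k$ in one connected component of $H^{(t-1)}_i$. These are spanned by a tree of $\hat F^{(t-1)}_i$, because a spanning forest spans each component. By \Cref{ob:ClusteringRelationSameLevel}, each $C_j$ lies in some $D_j\in\scrC^{(t)}_i$. It suffices to show that all the $D_j$ fall into the same component of $\hat F^{(t)}_{\inh,i}$. Since the tree is connected, it is enough to check this for each tree edge $(C_a,C_b)\in\hat F^{(t-1)}_i$.
\begin{itemize}
\item If the edge is non-inheritable, then $D_a=D_b$ and there is nothing to show.
\item If it is inheritable, then $\pi^{(t)}_i$ maps it to a virtual edge $(D_a,D_b)\in H^{(t)}_i$ (\Cref{lemma:VirtualEdgeMapping}). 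By definition this edge is inherited, so it lies in $H^{(t)}_{\inh,i}$.
\end{itemize}
Since $\hat F^{(t)}_{\inh,i}$ is a spanning forest of $H^{(t)}_{\inh,i}$, the endpoints $D_a$ and $D_b$ are connected in $\hat F^{(t)}_{\inh,i}$.

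The important point in Case 2 is that an inherited edge need not itself be selected into $\hat F^{(t)}_{\inh,i}$. What we use is only that a spanning forest of $H^{(t)}_{\inh,i}$ preserves connectivity between the endpoints of every edge of $H^{(t)}_{\inh,i}$, which holds because the forest is spanning. Chaining along the tree then puts all the $D_j$, and hence all of $C$, into one cluster of $\scrC^{(t)}_{\inh,i}$.

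A minor bookkeeping issue remains: $\scrC^{(t)}_{\inh,i}$ keeps the clusters of $\scrC^{(t)}_i$ that are not touched by inherited edges as singletons of the contraction. This is consistent with the argument, since every $D_j$ is a vertex of $H^{(t)}_i$ or an inactive cluster. The definition of contraction assigns each such $D_j$ to a unique cluster, so I expect no real obstacle beyond this.
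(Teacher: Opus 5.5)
Your proof is correct and follows the paper's argument. It uses the same inactive/active case split, and the active case goes through the same step: each edge of the $\hat F^{(t-1)}_i$-tree is either non-inheritable (so its endpoints land in one $\scrC^{(t)}_i$-cluster) or maps to an inherited edge of $H^{(t)}_{\inh,i}$, whose endpoints are joined by the spanning forest $\hat F^{(t)}_{\inh,i}$. The only differences are that the paper argues by contradiction and calls this edge-chaining step ``straightforward,'' whereas your direct version spells it out.
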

\begin{proof}
Assume for contradiction that there exists a cluster $C^{(t-1)}_{i+1}\in \scrC^{(t-1)}_{i+1}$ such that $C^{(t-1)}_{i+1}$ intersects two different clusters in $\scrC^{(t)}_{\inh,i}$.

First, we claim that $C^{(t-1)}_{i+1}$ is $i$-active. Assume the opposite. We must have $C^{(t-1)}_{i+1}\in \scrC^{(t-1)}_{i}$ by the description of the clustering procedure. By \Cref{ob:ClusteringRelationSameLevel}, we know 
\[
\scrC^{(t-1)}_{i}\preceq \scrC^{(t)}_{i}\preceq \scrC^{(t)}_{\inh,i}.
\]
Hence $C^{(t-1)}_{i+1}$ is contained in some cluster in $\scrC^{(t)}_{\inh,i}$, a contradiction.  

Given that $C^{(t-1)}_{i+1}$ is $i$-active, it is the union of several $i$-active clusters in $\scrC^{(t-1)}_{i}$ by the clustering procedure. Then, by $\scrC^{(t-1)}_{i}\preceq \scrC^{(t)}_{\inh,i}$ and the assumption that $C^{(t-1)}_{i+1}$ intersects two different clusters in $\scrC^{(t)}_{\inh,i}$, there must be two different $i$-active $\scrC^{(t-1)}_{i}$-clusters $C^{(t-1)}_{i,1}$ and $C^{(t-1)}_{i,2}$ inside $C^{(t-1)}_{i+1}$ such that they are contained by different $\scrC^{(t)}_{\inh,i}$-clusters.

Let $C^{(t)}_{i,1}$ and $C^{(t)}_{i,2}$ be the $\scrC^{(t)}_{i}$-clusters containing $C^{(t-1)}_{i,1}$ and $C^{(t-1)}_{i,2}$ respectively (we use $\scrC^{(t-1)}_{i}\preceq \scrC^{(t)}_{i}$ from \Cref{ob:ClusteringRelationSameLevel} again). If $C^{(t)}_{i,1}$ and $C^{(t)}_{i,2}$ are the same, combining it with $\scrC^{(t)}_{i}\preceq \scrC^{(t)}_{\inh,i}$ already leads to a contradiction. Hence we assume $C^{(t)}_{i,1}$ and $C^{(t)}_{i,2}$ are different clusters. Now, we have already known that $C^{(t-1)}_{i,1}$ and $C^{(t-1)}_{i,2}$ are in the same connected component of $H^{(t-1)}_{i}$ (i.e., they are connected by the virtual spanning forest $\hat{F}^{(t-1)}_{i}$). From the inheritance relation between $\hat{F}^{(t-1)}_{i}$-virtual edges and $H^{(t)}_{i}$-virtual edges, it is straightforward to see that $C^{(t)}_{i,1}$ and $C^{(t)}_{i,2}$ are in the same connected component of $H^{(t)}_{\inh,i}$, which means they are contained by the same $\scrC^{(t)}_{\inh,i}$-cluster, a contradiction.

\end{proof}

\paragraph{The Second Phase.} In the second phase, we will construct the snapshot $F^{(t)}$ based on the virtual solution $\hat{F}^{(t)}$. A natural attempt is to proceed analogously to the offline forest-forming procedure in \Cref{sect:Clustering}: for each virtual edge $\hat{e} = (C_{1},C_{2})\in \hat{F}^{(t)}_{i}\subseteq \hat{F}^{(t)}$, add into $F^{(t)}$ a $C_{1}$-$C_{2}$ shortest path in the metric $\calM / \scrC^{(t)}_{i}$. This will give a feasible $F^{(t)}$ with good approximation as we discussed in \Cref{sect:Clustering}. However, it gives no control over the recourse: such a shortest path may be made up of many original edges in the original metric $\calM$. 

We fix this issue by \emph{pinning} edges. Roughly speaking, whenever we add a shortest path, we pin some fraction of its original edges. Pinning an original edge means that we will never remove it from our online solution. We will guarantee two properties of the pinned edges. First, the total cost of pinned edges is always within a constant factor of the optimum, so that the online solution can achieve a constant competitive ratio. Second, the number of pinned edges grows linearly in $t$, so that we can achieve low recourse by charging the number of original edge insertions to the number of pinned edges. We now formalize this argument.

As demands arrive, we will maintain a growing set $A$ of pinned edges. Our job in the current arrival $t$ is to assign to each virtual edge $\hat{e}^{(t)}\in \hat{F}^{(t)}$ a set of original edges in $\calM$, denoted by $E_{\orig}(\hat{e}^{(t)})$.
From the previous arrival $t-1$, we are already given the pinned edge set $A^{(t-1)}$ and the original edge set $E_{\orig}(\hat{e}^{(t-1)})$ of every virtual edge $\hat{e}^{(t-1)}\in \hat{F}^{(t-1)}$. Meanwhile, we will also update the pinned edge set from $A^{(t-1)}$ to $A^{(t)}$.
The original solution is then
\[
F^{(t)} = A^{(t)}\cup \bigcup_{\hat{e}^{(t)}\in \hat{F}^{(t)}} E_{\orig}(\hat{e}^{(t)}).
\]

For each inherited virtual edge $\hat{e}^{(t)}\in \hat{F}^{(t)}$, its original  edge set is simply
\[
E_{\orig}(\hat{e}^{(t)}) = E_{\orig}(\hat{e}^{(t-1)})
\]
where $\hat{e}^{(t-1)}\in \hat{F}^{(t-1)}$ is the parent of $\hat{e}^{(t)}$
As for pinning, on a high level, we add a $1/\lambda$ fraction of the original edges to $A^{(t)}$ (selecting the cheapest ones);
if there are fewer than $\lambda$ many, we use a buffer set.
More precisely,
the original edge sets of the non-inherited virtual edges and the update from $A^{(t-1)}$ to $A^{(t)}$ are given by the following algorithm.

\begin{algorithm}[H]
\caption{Pinning}
\label{algo:PickOriginalEdgeSets}
\begin{algorithmic}[1]
\State Initialize $A^{(t)}\gets A^{(t-1)}$
\State Initialize $B\gets \emptyset$ to be a \emph{multiset}\Comment{initialize the buffer}
\For{each level $i$ and non-inherited $\hat{e}^{(t)} = (C_{1},C_{2})\in \hat{F}^{(t)}_{i}$}
\State\label{line:SP}$P\gets$ a $C_{1}$-$C_{2}$ shortest path in $G_{\orig}((\calM / \scrC^{(t)}_{i})/ A^{(t)})$
\State\label{line:Eorig}$E_{\orig}(\hat{e}^{(t)})\gets$ the original edges on $P$
\If {$|E_{\orig}(\hat{e}^{(t)})|\geq \lambda$}
\State\label{line:BatchPin}Add $\lfloor |E_{\orig}(\hat{e}^{(t)})|/\lambda\rfloor$ cheapest edges from $E_{\orig}(\hat{e}^{(t)})$ into $A^{(t)}$ 
\State $B\gets\emptyset$
\Else
\State\label{line:UpdateB}$B\gets B\uplus E_{\orig}(\hat{e}^{(t)})$  \Comment{multiset union}
\If{$|B|\geq \lambda$\label{cond:Bsize}} 
\State\label{line:SinglePin}Add the cheapest edge from $B$ into $A^{(t)}$
\State $B\gets\emptyset$
\EndIf
\EndIf
\EndFor
\end{algorithmic}
\end{algorithm}
We note that we define the buffer $B$ to be a multiset just for ease of analysis. Defining $B$ as a set instead will not affect the correctness.

Finally, to enhance understanding, we outline our entire online algorithm in \Cref{algo:main}.

\begin{algorithm}
\caption{Online Low-Recourse Steiner Forest}
\label{algo:main}
\begin{algorithmic}[1]
\For{each arrival $t$ with demand pair $(u^{(t)},v^{(t)})$} 
\State Compute the clustering hierarchy $\mathscr{H}^{(t)} = \{\scrC_{0}^{(t)},\scrC_{1}^{(t)},\scrC_{1}^{(t)},...\}$ \Comment{see \Cref{sect:Clustering}}
\For{each level $i$}
\State Compute $\hat{F}^{(t)}_{i}$, a set of virtual edges (between clusters in $\scrC^{(t)}_{i}$), each of which is either inherited or non-inherited \Comment{the first phase}
\EndFor
\State The virtual solution is $\hat{F}^{(t)} := \bigcup_{i} \hat{F}^{(t)}_{i}$
\State Update the pinned edge set $A^{(t)}$ and obtain the mapping $E_{\orig}$ from virtual edges $\hat{F}^{(t)}$ to sets of original edges, via \Cref{algo:PickOriginalEdgeSets} \Comment{the second phase}
\State Output online solution $F^{(t)}:=A^{(t)}\cup\bigcup_{\hat{e}^{(t)}\in \hat{F}^{(t)}} E_{\orig}(\hat{e}^{(t)})$
\EndFor
\end{algorithmic}
\end{algorithm}

\subsection{Feasibility}
\label{sect:Feasibility}

To prove feasibility, it suffices to show that virtual edges between two clusters $C_1$ and $C_2$ do indeed correspond to real edge sets which connect $C_1$ and $C_2$.

\begin{lemma}
\label{lemma:Feasibility}
For each arrival $t$, level $0\leq i\leq L^{(t)}$, and virtual edge $\hat{e}^{(t)} = (C^{(t)}_{1},C^{(t)}_{2})\in\hat{F}^{(t)}_{i}$, the clusters $C^{(t)}_{1}$ and $C^{(t)}_{2}$ belong to the same vertex in the graph $G_{\orig}((\calM / \scrC^{(t)}_{i}) / (A^{(t)}\cup E_{\orig}(\hat{e}^{(t)})))$.
\end{lemma}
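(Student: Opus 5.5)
We argue by induction on the arrival $t$, and for fixed $t$ we distinguish non-inherited from inherited virtual edges. Two monotonicity facts will be used repeatedly. First, $A^{(t-1)}\subseteq A^{(t)}$, since \Cref{algo:PickOriginalEdgeSets} only adds edges to $A$. Second, contracting more edges or using a coarser clustering only merges vertices further. Concretely, if $\scrC\preceq\scrC'$ and $E_1\subseteq E_2$, and $C_1,C_2$ are identified in $G_{\orig}((\calM/\scrC)/E_1)$, then the $\scrC'$-clusters containing them are identified in $G_{\orig}((\calM/\scrC')/E_2)$.

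\textbf{Non-inherited edges.} By line~\ref{line:SP}, $P$ is a $C_1$-$C_2$ path in $G_{\orig}((\calM/\scrC^{(t)}_i)/A^{(t)})$, where $A^{(t)}$ is the pinned set at the moment the path is computed. The path $P$ exists because, by definition of $H^{(t)}_i$, we have $\dist_{\calM/\scrC^{(t)}_i}(C_1,C_2)<2^{i+1}$. Each edge of $P$ is either an original edge, which lies in $E_{\orig}(\hat e^{(t)})$, or a traversal inside a contracted cluster or a contracted pinned edge. Contracting all edges of $P$ therefore identifies $C_1$ and $C_2$. Pinned edges only grow later in the loop, so the statement holds for the final $A^{(t)}$ by monotonicity.

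\textbf{Inherited edges.} Let $\hat e^{(t)}=(D_1,D_2)$ be inherited from its parent $\hat e^{(t-1)}=(C_1,C_2)\in\hat F^{(t-1)}_i$, with $C_j\subseteq D_j$ by \Cref{lemma:VirtualEdgeMapping}, and with $E_{\orig}(\hat e^{(t)})=E_{\orig}(\hat e^{(t-1)})$. By the induction hypothesis at $t-1$, $C_1$ and $C_2$ are identified in $G_{\orig}((\calM/\scrC^{(t-1)}_i)/(A^{(t-1)}\cup E_{\orig}(\hat e^{(t-1)})))$.

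We need the clusterings to be comparable, and \Cref{ob:ClusteringRelationSameLevel} gives $\scrC^{(t-1)}_i\preceq\scrC^{(t)}_i$. We also have $A^{(t-1)}\subseteq A^{(t)}$ and the same original edge set. Monotonicity then shows that $D_1$ and $D_2$ are identified in $G_{\orig}((\calM/\scrC^{(t)}_i)/(A^{(t)}\cup E_{\orig}(\hat e^{(t)})))$. The base case $t=1$ has no inherited edges.

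The main obstacle is stating the monotonicity fact carefully. Contraction by a clustering of a smaller terminal set $V^{(t-1)}\subseteq V^{(t)}$ must be viewed inside the larger graph. The fact holds because an identification in the coarse graph is witnessed by a chain of clusters linked by contracted edges, and each link persists under refinement to coarser clusters and supersets of contracted edges. Once this is in place, both cases are immediate.
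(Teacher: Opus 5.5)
Your proof is correct and follows the same route as the paper. The non-inherited case comes directly from the shortest-path step of \Cref{algo:PickOriginalEdgeSets} together with the fact that $A^{(t)}$ only grows. The inherited case is an induction on $t$ using $E_{\orig}(\hat e^{(t)})=E_{\orig}(\hat e^{(t-1)})$, $A^{(t-1)}\subseteq A^{(t)}$, and $\scrC^{(t-1)}_i\preceq\scrC^{(t)}_i$. Your explicit monotonicity fact (coarser clusterings and larger contracted edge sets preserve identifications, viewed inside the larger terminal set $V^{(t)}$) makes precise what the paper leaves implicit.
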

\begin{proof}
If $\hat{e}^{(t)}$ is a non-inherited virtual edge, this statement directly follows from \Cref{line:SP,line:Eorig} of \Cref{algo:PickOriginalEdgeSets}. Concretely, at the moment $E_{\orig}$ is defined at \Cref{line:Eorig}, $C^{(t)}_{1}$ and $C^{(t)}_{2}$ clearly belong to the same vertex in $(\calM / \scrC^{(t)}_{i}) / (A^{(t)}\cup E_{\orig}(\hat{e}^{(t)}))$ by definition. Therefore, at the end of arrival $t$, since $A^{(t)}$ only grows during \Cref{algo:PickOriginalEdgeSets}, the statement also holds.

The argument for an inherited virtual edge $\hat{e}^{(t)}$ is similar, but we need induction to formally prove it. Assume inductively that the statement of \Cref{lemma:Feasibility} holds for all virtual edges in $\hat{F}^{(t-1)}$. Let $\hat{e}^{(t-1)} = (C^{(t-1)}_{1},C^{(t-1)}_{2})\in\hat{F}^{(t-1)}_{i}$ be the parent of $\hat{e}^{(t)}$, which means $C^{(t)}_{1}\supseteq C^{(t-1)}_{1}$, $C^{(t)}_{2}\supseteq C^{(t-1)}_{2}$ and $E_{\orig}(\hat{e}^{(t)}) = E_{\orig}(\hat{e}^{(t-1)})$. Then the statement holds for $\hat{e}^{(t)}$ by the induction hypothesis and because $\scrC^{(t-1)}_{i}\preceq \scrC^{(t)}_{i}$ %
\end{proof}

And so, since $\hat{F}_i^{(t)}$ virtually connects up demands in the arrived demand set (by \Cref{ob:TopClustering}), we know that $F_i^{(t)}$ will be a feasible solution. 

\subsection{Recourse}
\label{sect:Recourse}

We are going to bound the total number of edge insertions by $O(n\lambda)$, that is,
\[
\sum_{t=1}^{n}|F^{(t)}\setminus F^{(t-1)}|\leq O(n\lambda),
\]
which gives $O(\lambda)$ amortized recourse (since the number of edge deletions does not exceed the number of edge insertions).

Consider the changes from $F^{(t-1)}$ to $F^{(t)}$. The new edges $F^{(t)}\setminus F^{(t-1)}$ come from either $A^{(t)}\setminus A^{(t-1)}$ or $\bigcup_{\text{non-inherited }\hat{e}^{(t)}\in \hat{F}^{(t)}} E_{\orig}(\hat{e}^{(t)})$ (since each inherited $\hat{e}^{(t)}\in \hat{F}^{(t)}$ has $E_{\orig}(\hat{e}^{(t)})\subseteq \hat{F}^{(t-1)}$). In other words, we have
\[
\sum_{t=1}^{n} |F^{(t)}\setminus F^{(t-1)}|\leq \sum_{t=1}^{n}\left(|A^{(t)}\setminus A^{(t-1)}| + \sum_{\text{non-inherited }\hat{e}^{(t)}\in \hat{F}^{(t)}} |E_{\orig}(\hat{e}^{(t)})|\right).
\]

For the pinned edge set, we have $\sum_{t=1}^{n}|A^{(t)}\setminus A^{(t-1)}|\leq 2n-1$ by the fact that $A$ only grows and by the following \Cref{lemma:PinnedEdgeSize}.

\begin{lemma}
\label{lemma:PinnedEdgeSize}
There are at most $2n-1$ pinned edges, i.e., $|A^{(n)}|\leq 2n-1$.
\end{lemma}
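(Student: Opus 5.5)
The plan is to show that the pinned edge set $A^{(n)}$ is acyclic when viewed as a subgraph of $\calM$ on the terminal set $V^{(n)}$. Since $|V^{(n)}|=2n$, any forest on this vertex set has at most $2n-1$ edges, which gives the bound. Pinned edges are never removed, so $A^{(n)}$ is the union of all edges ever added at \Cref{line:BatchPin} or \Cref{line:SinglePin} over all arrivals. It therefore suffices to prove the following invariant: whenever an edge is added to $A^{(t)}$ during \Cref{algo:PickOriginalEdgeSets}, its two endpoints lie in different connected components of the current pinned graph $(V^{(t)},A^{(t)})$.

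To establish the invariant, I will take a non-inherited virtual edge $\hat{e}^{(t)}=(C_1,C_2)\in\hat{F}^{(t)}_i$. The path $P$ at \Cref{line:SP} is a shortest path in $G_{\orig}((\calM/\scrC^{(t)}_i)/A^{(t)})$, where $A^{(t)}$ is the value at that moment. Such a path can be assumed simple in this contracted graph, because a shortest path with positive edge costs visits each contracted vertex at most once. Each original edge $(x,y)$ on $P$ therefore joins two distinct vertices of the contracted graph. In particular, $x$ and $y$ are not in the same component of the current $A^{(t)}$, since otherwise they would have been identified. The same holds for the different original edges of $P$ among themselves: because $P$ is simple in the contracted graph, no subset of its edges closes a cycle together with $A^{(t)}$. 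Consequently, adding any subset of $E_{\orig}(\hat{e}^{(t)})$ to $A^{(t)}$, as done at \Cref{line:BatchPin}, keeps $A^{(t)}$ acyclic.

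The main obstacle is the buffer step at \Cref{line:SinglePin}. There the pinned edge may come from an earlier path in $B$, and $A^{(t)}$ may have grown since that path was computed. I would handle this by observing that between two resets of $B$, no edges are added to $A^{(t)}$. Additions only happen at \Cref{line:BatchPin} and \Cref{line:SinglePin}, and each is immediately followed by $B\gets\emptyset$. So every edge in $B$ was taken from a path computed with respect to the same, current $A^{(t)}$, and by the previous paragraph each such edge joins two distinct components of $A^{(t)}$. Adding a single such edge therefore cannot create a cycle.

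One more subtlety remains. The contraction also identifies vertices within each cluster of $\scrC^{(t)}_i$, so an edge that is not a self-loop in the contracted graph does join different $A^{(t)}$-components, and clusters do not threaten the argument. Pinned edges could in principle connect terminals across arrivals, but the vertex set only grows and acyclicity is preserved at every single insertion. Hence $A^{(n)}$ is a forest on $2n$ vertices, and $|A^{(n)}|\le 2n-1$.
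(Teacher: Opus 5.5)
Your proof is correct and follows the same route as the paper: every pinning step at \Cref{line:BatchPin} or \Cref{line:SinglePin} adds edges that close no cycle with the current pinned set, so $A^{(n)}$ is a forest on the $2n$ terminals. The paper only asserts this no-cycle observation. Your argument supplies the omitted details correctly: shortest paths with positive costs are simple in the contracted graph, and $A^{(t)}$ does not change between resets of $B$.
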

\begin{proof}
Observe that whenever we pin several edges at \Cref{line:BatchPin} or one edge at \Cref{line:SinglePin}, the new pinned edges do not form a cycle with the existing pinned edges on the terminal set $V^{(t)}$. Hence at the end $A^{(n)}$ is a forest on $V^{(n)}$, which implies that $|A^{(n)}|\leq |V^{(n)}|-1= 2n-1$.
\end{proof}

Next, we charge the total size of original edge sets of non-inherited virtual edges (\emph{non-inherited original edge sets} for short) to the number of pinned edges. In every execution of \Cref{line:BatchPin}, we charge each new pinned edge a fee of $10\lambda$ units to account for $|B| + |E_{\orig}(\hat{e}^{(t)})|$, which is feasible because
\[
|B| + |E_{\orig}(\hat{e}^{(t)})|\leq 10\lambda\cdot \lfloor |E_{\orig}(\hat{e}^{(t)})|/\lambda\rfloor
\]
when $|E_{\orig}(\hat{e}^{(t)})|\geq \lambda$ and $|B|<\lambda$ (guaranteed by \Cref{cond:Bsize}). In every execution of \Cref{line:SinglePin}, again charge the new pinned edge a fee of $10\lambda$ units to account for $|B|$, which is available because $|B|\leq 2\lambda$ (right before \Cref{line:UpdateB}, the buffer $B$ has size less than $\lambda$, and the current $E_{\orig}(\hat{e}^{(t)})$ has size less than $\lambda$). Finally, observe that every non-inherited original edge set is accounted for in the charging argument (either directly or via the buffer $B$), except for those remaining in the buffer $B$ at the end of each arrival (at these moments, $|B|<\lambda$). Therefore, we have
\[
\sum_{t=1}^{n}\sum_{\text{non-inherited }\hat{e}^{(t)}\in \hat{F}^{(t)}}|E_{\orig}(\hat{e}^{(t)})|\leq 10\lambda\cdot |A^{(n)}| + n\lambda\leq O(n\lambda).
\]

\subsection{Approximation}
\label{sect:Approximation}

To avoid clutter, we only show that after the last arrival $n$, the online solution $F^{(n)}$ is an $O(1)$-approximation to the instance $(\calM^{(n)},D^{(n)})$. The approximation in the other arrivals can be bounded in exactly the same way.

Let $\OPT$ be the cost of the optimal solution to $(\calM^{(n)},D^{(n)})$. Recall that 
\[
F^{(n)} = A^{(n)}\cup \bigcup_{\hat{e}^{(n)}\in \hat{F}^{(n)}} E_{\orig}(\hat{e}^{(n)}) \,.
\]
Hence, to bound the cost of $F^{(n)}$ by $O(\log n\cdot\OPT/\lambda)$, it suffices to bound 
\begin{itemize}
\item $\cost(A^{(n)})$ (called the \emph{pinning cost}) by $O(\log n\cdot\OPT/\lambda)$, and 
\item $\sum_{\hat{e}^{(n)}\in \hat{F}^{(n)}}\cost(E_{\orig}(\hat{e}^{(n)}))$ (called the \emph{forest-forming cost}) by $O(\OPT)$.
\end{itemize}
\paragraph{The Forest-Forming Cost.} The forest-forming cost can be interpreted as the cost of the offline Steiner forest algorithm introduced in \Cref{sect:Clustering}, which can be directly bounded by \Cref{lemma:OfflineCost}.

Formally, for each level $i$ and each virtual edge $\hat{e}^{(n)}\in \hat{F}^{(n)}_{i}$, we have
\begin{equation}
\label{eq:OriginalEdgeSetCost}
\cost(E_{\orig}(\hat{e}^{(n)}))\leq 2^{i+1}.
\end{equation}
To see this, we trace $\hat{e}^{(n)}$ back through the inheritance relation until reaching a non-inherited virtual edge $\hat{e}^{(t)} = (C_{1},C_{2})\in \hat{F}^{(t)}_{i}$ in some arrival $t\leq n$. At the moment when $E_{\orig}(\hat{e}^{(t)})$ is defined on \Cref{line:SP,line:Eorig} of \Cref{algo:PickOriginalEdgeSets}, we have
\begin{equation}
\label{eq:OriginalEdgeSetCost2}
\cost(E_{\orig}(\hat{e}^{(t)}))\leq \dist_{(\calM / \scrC^{(t)}_{i})\setminus A^{(t)}}(C_{1},C_{2})\leq \dist_{\calM / \scrC^{(t)}_{i}}(C_{1},C_{2})\leq 2^{i+1},
\end{equation}
where the last inequality is by $\hat{F}^{(t)}_{i}\subseteq H^{(t)}_{i}$ 
and the definition of $H^{(t)}_{i}$. The inheritance relation implies $E_{\orig}(\hat{e}^{(n)}) = E_{\orig}(\hat{e}^{(t)})$, so we get the desired bound (\ref{eq:OriginalEdgeSetCost}).

Given (\ref{eq:OriginalEdgeSetCost}), we bound the forest-forming cost by
\begin{align*}
\sum_{\hat{e}^{(n)}\in \hat{F}^{(n)}}\cost(E_{\orig}(\hat{e}^{(n)}))\leq \sum_{i}|\hat{F}^{(n)}_{i}|\cdot 2^{i+1}= \sum_{i}(|\scrC^{(n)}_{i}|-|\scrC^{(n)}_{i+1}|)\cdot 2^{i+1}\leq O(\OPT),
\end{align*}
where the equality is by $|\hat{F}^{(n)}_{i}| = |\scrC^{(n)}_{i}|-|\scrC^{(n)}_{i+1}|$ and the last inequality is by \Cref{lemma:OfflineCost}.

\paragraph{The Pinning Cost.} \Cref{algo:PickOriginalEdgeSets}  pins roughly a $(1/\lambda)$-fraction of edges from the non-inherited original edge sets, so naturally the first step is to bound the total cost of non-inherited original edge sets.

\begin{restatable}{theorem}{ThmOnlineCost}
\label{lemma:OnlineCost}
For each level $i$ we have
\begin{equation}
\label{eq:OnlineCost}
\sum_{t}|\hat{F}^{(t)}_{i}\setminus \hat{F}^{(t)}_{\inh,i}|\cdot 2^{i+1}\leq O(\OPT) \,.
\end{equation}
\end{restatable}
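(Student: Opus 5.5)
Fix a level $i$. The plan is a dual-fitting argument: I will show that the number of non-inherited virtual edges created at level $i$, summed over all arrivals, is at most $O(1)$ times the maximum number of disjoint balls of radius $\Theta(2^{i})$ around terminals whose mates lie outside the ball. The standard Steiner-forest dual (moats) then gives $\OPT \geq \Omega(2^{i})$ per such ball, which is \eqref{eq:OnlineCost}.

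\textbf{Counting step.} By construction $|\hat{F}^{(t)}_{i}\setminus\hat{F}^{(t)}_{\inh,i}| = |\scrC^{(t)}_{\inh,i}| - |\scrC^{(t)}_{i+1}|$, the number of extra merges beyond the inherited ones. By \Cref{lemma:ClusteringRelationInherit}, $\scrC^{(t-1)}_{i+1}\preceq\scrC^{(t)}_{\inh,i}$, so each $\scrC^{(t)}_{\inh,i}$-cluster contains at least one $\scrC^{(t-1)}_{i+1}$-cluster or a new terminal. I would bound $|\scrC^{(t)}_{\inh,i}|-|\scrC^{(t)}_{i+1}|$ by the number of ``events'' at arrival $t$: either a new terminal creates a fresh active component, or inheritance breaks because some inheritable edge of $\hat{F}^{(t-1)}_i$ became non-inheritable or was not chosen as a parent. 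The second kind happens only when two $\scrC^{(t-1)}_i$-clusters already merged at a lower level in $\scrC^{(t)}_i$; by \Cref{ob:ClusteringRelationSameLevel} these lower-level coarsenings are monotone.

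\textbf{Potential.} I would therefore use the potential $\Phi^{(t)} = |\{\,i\text{-active clusters of } \scrC^{(t)}_{i+1}\}|$ together with $|\scrC^{(t)}_{i}|$. Each arrival adds two terminals, which raises the number of clusters by at most $2$. Each non-inherited edge reduces $|\scrC^{(t)}_{\inh,i}|$ to $|\scrC^{(t)}_{i+1}|$. So the total number of non-inherited edges at level $i$ should telescope to roughly $2n$ plus the number of inherited edges that disappear. That alone only gives $O(n)\cdot 2^{i+1}$, which is too weak. I need to charge to $\OPT$ instead.

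\textbf{Source vertices (main obstacle).} For each non-inherited edge created at arrival $t$, I would designate a source terminal: a terminal in an $i$-active cluster whose ball of radius $2^{i-1}$ is disjoint from the balls of the other currently designated sources. This uses \Cref{lemma:ClustersGap}: distinct $i$-active $\scrC^{(t)}_i$-clusters are at distance at least $2^{i}$ in $\calM/\scrC^{(t)}_i$. Activity guarantees the mate is at distance at least $2^{i-1}$, so each ball is a valid moat. The difficulty is that distances in $\calM/\scrC^{(t)}_i$ are contracted by lower-level clusters, so ``balls'' must be grown in the contracted metric. The plan is to grow them as Goemans--Williamson-style dual variables on sets $S$ separating a demand. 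Concretely, take the level-$<i$ cluster structure, in which each active cluster's lower-level moats are already counted by \Cref{lemma:OfflineCost}, and add a layer of width $2^{i-2}$ around each source's cluster. Feasibility of the dual then reduces to the disjointness of these layers.

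To keep the set of sources consistent across arrivals, so that the total over all $t$ is bounded by a single dual and not by a sum of duals, I would maintain the sources monotonically. When inheritance fails and a new non-inherited edge appears, the breakage is caused by a lower-level merge that happened newly at arrival $t$. That merge is itself charged at a lower level, which frees at most $O(1)$ of its level-$i$ source balls to be reused. The remaining non-inherited edges, those due to fresh components, each get a fresh source among the new terminals.

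Making this amortization precise, that is, showing the final source set is an $O(1)$-approximate packing of disjoint valid moats of radius $\Omega(2^{i})$ whose size is at least $\Omega(1)\cdot\sum_t|\hat F^{(t)}_i\setminus\hat F^{(t)}_{\inh,i}|$, is the step I expect to be hardest. I would first try to prove it by induction on $t$, using \Cref{lemma:ClusteringRelationInherit} as the invariant.
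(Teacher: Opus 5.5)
Your proposal does not yet prove the statement. The step you defer as ``hardest'' is the whole content of the theorem: exhibiting sources that are pairwise far apart in $\calM$ and at least as numerous as $\sum_t|\hat{F}^{(t)}_{i}\setminus\hat{F}^{(t)}_{\inh,i}|$. The mechanism you sketch for it would also fail.

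You expect the telescoping to leave an extra term from ``inherited edges that disappear'', paid for by lower-level merges that free level-$i$ balls for reuse. If a ball is reused for a second non-inherited edge, the packing no longer grows with the count. The surplus would then have to be charged to a level-$j$ merge with $j<i$, whose budget $2^{j+1}$ covers $2^{i+1}$ only if you lose a factor $2^{i-j}$. That kills a per-level $O(\OPT)$ bound.

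Your worry that balls must be grown in the contracted metric, with Goemans--Williamson layers around lower-level clusters, is a misdiagnosis. Contraction only shortens distances, so for $x,y$ in distinct $i$-active $\scrC^{(t)}_{i}$-clusters, \Cref{lemma:ClustersGap} gives $\dist_{\calM}(x,y)\geq\dist_{\calM/\scrC^{(t)}_{i}}(C_x,C_y)\geq 2^{i}$. Since $\calM$-distances never change, plain balls of radius $2^{i-1}$ in $\calM$ suffice. Also, lying in an $i$-active cluster does not put a terminal's mate at distance $>2^{i-1}$; the source itself must have level at least $i$.

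The missing idea is that there is no extra term. Use your own potential $\Phi^{(t)}$, the number of $i$-active $\scrC^{(t)}_{i+1}$-clusters. Since only active clusters are merged, $|\hat{F}^{(t)}_{i}\setminus\hat{F}^{(t)}_{\inh,i}|=|\scrC^{(t)}_{\inh,i}|-|\scrC^{(t)}_{i+1}|$ equals the number of $i$-active $\scrC^{(t)}_{\inh,i}$-clusters minus $\Phi^{(t)}$. By \Cref{lemma:ClusteringRelationInherit}, each $i$-active $\scrC^{(t)}_{\inh,i}$-cluster falls into one of two cases:
\begin{itemize}
\item it contains an $i$-active $\scrC^{(t-1)}_{i+1}$-cluster, and since these are disjoint there are at most $\Phi^{(t-1)}$ such clusters; or
\item it is \emph{fresh}, meaning its only vertices of level at least $i$ are among $u^{(t)},v^{(t)}$.
\end{itemize}
So the count at arrival $t$ is at most $\Phi^{(t-1)}+f_t-\Phi^{(t)}$, where $f_t$ is the number of fresh clusters, and the total telescopes to at most $\sum_t f_t$. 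Broken inheritance costs nothing extra. New level-$i$ merges between old clusters, even those triggered by lower-level merges, are paid for by the drop in $\Phi$.

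So designate sources at fresh events, not at non-inherited edges: take the new terminal in each fresh cluster. It has level at least $i$, and it is the only such vertex of its $\scrC^{(t)}_{i}$-cluster. Every earlier source, and the other new terminal if it is also a source, therefore lies in a different $i$-active $\scrC^{(t)}_{i}$-cluster, at $\calM$-distance at least $2^{i}$. By \Cref{ob:DualSolutionBallGrowing}, balls of radius $2^{i-1}$ around all sources give a feasible dual for the standard demand-separating LP, of value at least $2^{i-1}\sum_t|\hat{F}^{(t)}_{i}\setminus\hat{F}^{(t)}_{\inh,i}|$.

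The paper runs exactly this argument. Its $\hat{X}^{(t)}$ is the set of fresh sources, and $X^{(t)}\subseteq\hat{X}^{(t)}$ collects the $k-1$ sources spent whenever $k$ active $\scrC^{(t)}_{\inh,i}$-clusters merge. It phrases the dual with the faithful LP of \Cref{lemma:LPsolution}, using the unspent vertices of $\hat{X}\setminus X$ to certify that every ball separates a $\scrC^{(n)}_{\max}$-cluster.
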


We say that a non-inherited original edge set is from level $i$ if its corresponding non-inherited virtual edge is from $\hat{F}^{(t)}_{i}$ for some $t$.
\Cref{lemma:OnlineCost} above shows that for each level $i$, the total cost of non-inherited original edge sets summed over all arrivals is bounded by $O(\OPT)$ (note that each non-inherited original edge set has cost at most $2^{i+1}$ as we argued in (\ref{eq:OriginalEdgeSetCost2})). 

We prove \Cref{lemma:OnlineCost}
in \Cref{sec:proof_of_lemma_onlinecost} below.
Given \Cref{lemma:OnlineCost}, the intuition for bounding the pinning cost is as follows. If we assume that there are $O(\log n)$ levels, then \Cref{lemma:OnlineCost} shows that the total cost of non-inherited original edge sets is $O(\log n\cdot \OPT)$, so the pinning cost can be bounded by $O(\log n\cdot \OPT/\lambda)$ since we pinned roughly a $(1/\lambda)$ fraction. Without the assumption of $O(\log n)$ levels, the intuition is that non-inherited original edge sets not from the top $O(\log n)$ levels have negligible costs. We formalize this intuition as follows.

To see $\cost(A^{(n)})\leq O(\OPT)$, %
by \Cref{lemma:PinnedEdgeSize} it suffices to show $\cost(A')\leq O(\OPT)$, where $A' = \{e\mid e\in A^{(n)}\text{ s.t. }\cost(e)> \OPT/n\}$ are pinned edges with non-negligible costs. Note that $A$ will only grow at \Cref{line:BatchPin,line:SinglePin} in \Cref{algo:PickOriginalEdgeSets}. 

\medskip

\noindent\underline{\Cref{line:BatchPin}.} Consider an execution of \Cref{line:BatchPin} that brings new $A'$-pinned edges. The current non-inherited original edge set $E_{\orig}(\hat{e}^{(t)})$ must come from the top $\lceil\log n\rceil+2$ levels (i.e., the current level $i$ is between $L^{(n)} - \lceil\log n\rceil-1$ and the maximum level $L^{(n)}$). Otherwise, we have $\cost(E_{\orig}(\hat{e}^{(t)}))\leq 2^{i+1}\leq 2^{L^{(n)}-1}/n\leq \OPT/n$ (note that $2^{L^{(n)}-1}\leq \OPT$ by the definition of the function $\level$), and any pinned edge from $E_{\orig}(\hat{e}^{(t)})$ would have negligible cost. Furthermore, the algorithm guarantees that the total cost of new $A'$-pinned edges from this execution is at most $\cost(E_{\orig}(\hat{e}^{(t)}))/\lambda$.

\medskip

\noindent\underline{\Cref{line:SinglePin}.} Suppose an execution of \Cref{line:SinglePin} brings a new $A'$-pinned edge. %
Note that at this moment, the buffer $B$ is the \emph{multiset union} of several non-inherited original edge sets, all of which come from the top $\lceil\log n\rceil+2$ levels by the same argument as in the previous case. Again, the algorithm guarantees the cost of this new $A'$-pinned edge is at most $\cost(B)/\lambda$.

\medskip

Finally, observe that the sum of $\cost(E_{\orig}(\hat{e}^{(t)}))$ over such executions of \Cref{line:BatchPin} (which brings new $A'$ edges) and $\cost(B)$ over such executions of \Cref{line:SinglePin} is at most $O(\log n\cdot \OPT)$ by \Cref{lemma:OnlineCost} (since the involved non-inherited original edge sets are all from the top $\lceil \log n\rceil+2$ levels), so we get $\cost(A')\leq O(\log n\cdot\OPT/\lambda)$ and $\cost(A)\leq \cost(A') + O(\OPT)\leq O(\log n\cdot\OPT/\lambda)$.

\subsection{Proof of \Cref{lemma:OnlineCost}} \label{sec:proof_of_lemma_onlinecost}

We will invoke a fact that forms the initial part of the analysis in~\cite{DBLP:conf/stoc/Gupta015}. By losing a constant factor in cost, it allows us to assume that the clustering induced by our solution is a refinement of that of the optimal solution (in~\cite{DBLP:conf/stoc/Gupta015} this is called the faithfulness property). For clarity, we refer to $\scrC^{(n)}_{L^{(n)}+1}$ as $\scrC^{(n)}_{\max}$, which is the top-level clustering of the hierarchy $\mathscr{H}^{(n)}$ after the last arrival.

\begin{lemma}[Theorem 4.1 in \cite{DBLP:conf/stoc/Gupta015}]
\label{lemma:LPsolution}
There exists a solution $F^{\star\star}$ of the instance $(\calM, D)$ satisfying that
\begin{itemize}
\item the cost of $F^{\star\star}$ is at most $O(\OPT)$, and
\item for each cluster $C\in \scrC^{(n)}_{\max}$, 
all vertices in $C$ belong to the same connected component of $F^{\star\star}$.
\end{itemize}
\end{lemma}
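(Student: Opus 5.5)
The plan is to take $F^{\star\star}$ to be the output of the offline forest-forming procedure of \Cref{sect:Clustering}, run on the final instance $(\calM^{(n)},D^{(n)})$ with its hierarchy $\mathscr{H}^{(n)}$. No modification of an optimal forest is needed. Both required properties should then follow from results already stated: the cost bound from \Cref{lemma:OfflineCost}, and the connectivity property from a short induction over levels.

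\textbf{Cost.} Concretely, for each level $0\le i\le L^{(n)}$, fix any virtual spanning forest $\hat F_i$ of $H^{(n)}_i$. For each virtual edge $(C_1,C_2)\in\hat F_i$, add to $F^{\star\star}$ the original edges of a $C_1$-$C_2$ shortest path in $G_{\orig}(\calM/\scrC^{(n)}_i)$. By the definition of virtual edges, each such path costs less than $2^{i+1}$. Since $|\hat F_i| = |\scrC^{(n)}_i|-|\scrC^{(n)}_{i+1}|$, we get
\[
\cost(F^{\star\star})\le\sum_{i=0}^{L^{(n)}}(|\scrC^{(n)}_i|-|\scrC^{(n)}_{i+1}|)\,2^{i+1}\le O(\OPT)
\]
by \Cref{lemma:OfflineCost}.

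\textbf{Connectivity.} I would prove by induction on $i$ that every cluster $C\in\scrC^{(n)}_i$ lies within a single connected component of $(V,F^{\star\star}_{<i})$. Here $F^{\star\star}_{<i}$ denotes the edges added for levels below $i$.
\begin{itemize}
\item The base case $i=0$ is trivial, since all clusters are singletons.
\item For the inductive step, an $i$-inactive cluster is copied unchanged, so there is nothing to show.
\item An $i$-active cluster of $\scrC^{(n)}_{i+1}$ is the union of the $\scrC^{(n)}_i$-clusters in one component $Q$ of $H^{(n)}_i$. These are spanned by the virtual edges of $\hat F_i$ restricted to $Q$.
\item For a virtual edge $(C_1,C_2)$, its path in $G_{\orig}(\calM/\scrC^{(n)}_i)$ is a sequence of original edges whose consecutive endpoints lie in common $\scrC^{(n)}_i$-clusters. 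By the induction hypothesis, each of these clusters is internally connected by $F^{\star\star}_{<i}$. Concatenating therefore connects some vertex of $C_1$ to some vertex of $C_2$ in $F^{\star\star}_{<i+1}$, and hence connects all of $C_1\cup C_2$.
\item Propagating along the tree $\hat F_i\cap Q$ connects the whole merged cluster.
\end{itemize}
Applying this at $i=L^{(n)}+1$ gives the second bullet for every $C\in\scrC^{(n)}_{\max}$. Feasibility of $F^{\star\star}$ for $(\calM,D)$ is then immediate from \Cref{ob:TopClustering}.

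The main obstacle is not technical but logical. \Cref{lemma:OfflineCost} is attributed to the proof of Theorem 4.2 in \cite{DBLP:conf/stoc/Gupta015}, which itself uses the faithfulness statement (their Theorem 4.1). So as a self-contained argument this route is circular, although it is legitimate here since \Cref{lemma:OfflineCost} is an earlier stated result we may assume.

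If a direct argument were wanted instead, I would start from an optimal forest $F^\star$. I would then add, for each merge at level $i$ joining two clusters currently in different components of the augmented forest, a connecting path of cost at most $2^{i+1}$. These additions would be charged via the dual-ball argument of \cite{DBLP:conf/stoc/Gupta015}: active clusters at level $i$ are $2^i$-separated by \Cref{lemma:ClustersGap}, so disjoint balls of radius $2^{i-1}$ give a moat packing of value comparable to $\OPT$. That charging is where the real work would lie.
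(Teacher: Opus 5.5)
Your proof is correct, and it takes a different route from the paper. The paper does not prove this lemma; it imports it verbatim as Theorem 4.1 (faithfulness) of \cite{DBLP:conf/stoc/Gupta015}.

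You instead observe that the offline forest-forming procedure of \Cref{sect:Clustering}, run on the final instance, is faithful to $\scrC^{(n)}_{\max}$ by construction. Your level-by-level induction is sound: each $\scrC^{(n)}_i$-cluster lies in one component of the edges bought below level $i$, and a shortest path in $G_{\orig}(\calM/\scrC^{(n)}_i)$ only threads through $\scrC^{(n)}_i$-clusters that are already internally connected. The cost bound is then exactly \Cref{lemma:OfflineCost} combined with $|\hat F_i| = |\scrC^{(n)}_i|-|\scrC^{(n)}_{i+1}|$.

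What your route buys is economy. The paper would need only one external fact from \cite{DBLP:conf/stoc/Gupta015} instead of two. In the proof of \Cref{lemma:OnlineCost}, one could also bound $\OPT_{\LP}$ directly by the cost of the offline forest.

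What it does not buy is independence. As you correctly note, the paper describes faithfulness as the initial part of the analysis underlying \Cref{lemma:OfflineCost}. Your derivation therefore runs the cited paper's logic in reverse. That is valid, and not circular, as long as \Cref{lemma:OfflineCost} is treated as an established black box, but it is not a from-scratch proof of faithfulness.

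Do not lean on your fallback sketch. Packing disjoint radius-$2^{i-1}$ balls bounds the level-$i$ merges by $O(\OPT)$ only level by level. Summing such bounds naively over levels loses a logarithmic factor, which is exactly what happens in \Cref{sect:Approximation}. Obtaining an overall $O(\OPT)$ is the nontrivial content of \cite{DBLP:conf/stoc/Gupta015}. Your main argument does not need the fallback.
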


Now we prove \Cref{lemma:OnlineCost} using \Cref{lemma:LPsolution}.
We restate \Cref{lemma:OnlineCost} for ease of reading.

\ThmOnlineCost*

\begin{proof}
We use a dual fitting argument.

\paragraph{Primal LP and its Dual.} Consider the following primal LP, where ${\cal S}$ is the family of all cuts $S\subseteq V$ separating at least one cluster in $\scrC^{(n)}_{\max}$, i.e., there exists a $C\in \scrC^{(n)}_{\max}$ with $C\cap S\neq \emptyset$ and $C\setminus S\neq \emptyset$. For each cut $S\in {\cal S}$, we use $\partial(S)\subseteq E$ to denote the set of original edges crossing $S$.
\begin{align*}
\min \qquad & \sum_{e\in E} \cost(e)\cdot x(e)
\\\text{s.t.} \qquad & \sum_{e\in\partial(S)} x(e)\geq 1~~~~~ \forall S\in {\cal S}
\\ & x\ge0.
\end{align*}
Note that this primal LP does not exactly correspond to the instance $(\calM, D)$, since it requires the solution to have stronger connectivity (each cluster in $\scrC^{(n)}_{\max}$ should have all its vertices connected). However, its optimal value $\OPT_{\LP}$ will not exceed $\OPT$ too much, i.e., we have
\[
\OPT_{\LP}\leq O(\OPT),
\]
because the solution $F^{\star\star}$ in \Cref{lemma:LPsolution} gives a feasible solution (for each $e\in E$, set $x(e) = 1$ if $e\in F^{\star\star}$, otherwise $x(e) = 0$) to the LP with cost at most $O(\OPT)$. The dual LP is as follows:
\begin{align*}
\max \qquad & \sum_{S\in {\cal S}} y(S)
\\\text{s.t.} \qquad & \sum_{S \; : \; e\in\partial(S)} y(S)\leq \cost(e)~~~~~ \forall e\in E
\\ & y\ge0.
\end{align*}

An intermediate goal is to construct a feasible dual solution with value approximately the LHS of (\ref{eq:OnlineCost}). We slightly abuse notation by viewing a dual solution as a function $y:2^{V}\to \mathbb{R}_{\geq 0}$, and we say such a dual solution $y$ is feasible if it satisfies the dual constraints and is zero outside the actual domain ${\cal S}$.

\paragraph{Construction of the Dual Solution $y$.} We start with some standard terminology of the dual-fitting argument. 
Consider a vertex $v\in V$, and let $u_{1},u_{2},...,u_{2n}$ be an order of vertices with $\dist_{\calM}(v,u_{j})$ increasing (in particular $u_{1}=v$). By \emph{growing a ball with radius $r$ around $v$}, we mean for each $u_{j}$ with $\dist_{\calM}(v,u_{j})\leq r$, adding a value $\min\{\dist_{\calM}(v,u_{j+1}),r\}-\dist_{\calM}(v,u_{j})$ to the dual variable $y(\{u_{1},...,u_{j}\})$.
The following observation is straightforward by the triangle inequality.
\begin{observation}
\label{ob:DualSolutionBallGrowing}
Let $X\subseteq V$ be a set of vertices with pairwise distances at least $2r$ in $\calM$. Then the dual solution obtained by growing a ball of radius $r$ around each vertex in $X$ satisfies all the dual constraints.
\end{observation}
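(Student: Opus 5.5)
The plan is to check the dual constraint edge by edge. Fix an original edge $e=(a,b)$ and compute, for each center $v\in X$, the total dual value $c_v$ that the ball around $v$ places on sets $S$ with $e\in\partial(S)$. Then show $\sum_{v\in X} c_v\le \cost(e)=\dist_{\calM}(a,b)$.

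\emph{Step 1: a single ball.} Growing a ball of radius $r$ around $v$ amounts to assigning, for each $\rho\in[0,r)$, an infinitesimal amount $d\rho$ to the set $\Ball(v,\rho)=\{u:\dist_{\calM}(v,u)\le\rho\}$. The prefix sets $\{u_1,\dots,u_j\}$ are exactly these balls, each receiving the length of the corresponding $\rho$-interval. Such a set separates $a$ from $b$ exactly when $\rho$ lies between $\dist_{\calM}(v,a)$ and $\dist_{\calM}(v,b)$. Hence
\[
c_v=\bigl|\min\{\dist_{\calM}(v,a),r\}-\min\{\dist_{\calM}(v,b),r\}\bigr|.
\]
In particular $c_v\le \dist_{\calM}(a,b)$ by the triangle inequality, and $c_v>0$ only if $\min\{\dist_{\calM}(v,a),\dist_{\calM}(v,b)\}<r$. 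Ties in the ordering $u_1,u_2,\dots$ only create zero-length intervals, so they are harmless.

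\emph{Step 2: at most two centers contribute.} If two distinct centers $v,w\in X$ both had $\dist_{\calM}(\cdot,a)<r$, then $\dist_{\calM}(v,w)<2r$, contradicting the separation assumption; the same holds for $b$. So at most one center $v$ is within distance $r$ of $a$, and at most one center $w$ is within distance $r$ of $b$. Every contributing center is one of these. If only one center contributes, Step 1 already gives the bound.

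\emph{Step 3: the two-center case.} Here $v\neq w$, $\dist_{\calM}(v,a)<r$ and $\dist_{\calM}(w,b)<r$. We may take $v$ to be the contributing center closer to $a$: if $v$ were within $r$ of both endpoints, then $w$ could not be within $r$ of $b$. From the formula for $c_v$ and $c_w$,
\[
c_v\le r-\dist_{\calM}(v,a),\qquad c_w\le r-\dist_{\calM}(w,b).
\]
By the triangle inequality and the separation assumption,
\[
2r\le \dist_{\calM}(v,w)\le \dist_{\calM}(v,a)+\dist_{\calM}(a,b)+\dist_{\calM}(b,w).
\]
Rearranging gives $c_v+c_w\le 2r-\dist_{\calM}(v,a)-\dist_{\calM}(w,b)\le \dist_{\calM}(a,b)$, which is the dual constraint for $e$.

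The only delicate point is the bookkeeping in Step 1: translating the discrete prefix-set description into the "measure of radii separating $a$ and $b$" formula. Once that formula is in place, Steps 2 and 3 are direct consequences of the triangle inequality.
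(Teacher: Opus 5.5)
Your proof is correct and is essentially the argument the paper has in mind. The paper only remarks that the observation is straightforward by the triangle inequality, and your three steps write that argument out in full: each ball contributes $|\min\{\dist_{\calM}(v,a),r\}-\min\{\dist_{\calM}(v,b),r\}|\le \dist_{\calM}(a,b)$ to edge $(a,b)$; the $2r$-separation leaves at most one center strictly within distance $r$ of each endpoint; and the two-center case follows from $2r\le \dist_{\calM}(v,a)+\dist_{\calM}(a,b)+\dist_{\calM}(b,w)$.
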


The dual solution we will construct is exactly by growing a ball of radius $r$ around each vertex in a subset $X\subseteq V$. We set the radius 
\[
r = 2^{i-1},
\]
and define the subset $X\subseteq V$ by the following procedure.

The procedure will define subsets $\hat{X}^{(t)}$ and $X^{(t)}$ for each arrival $t$.
We will take the final $X$ to be $X^{(n)}$ (corresponding to the last arrival).
The subsets will be defined so as to
satisfy the following invariants:
\begin{enumerate}
\item\label{Invariant1} $\hat{X}^{(t)}$ is contained in the union of $i$-active clusters in $\scrC^{(t)}_{i+1}$. Recall that $i$-active clusters are the clusters $C$ with $\level(C)\geq i$.
\item\label{Invariant2} Vertices in $\hat{X}^{(t)}$ have pairwise distances at least $2r$ in $\calM$.
\item\label{Invariant3} $X^{(t)}\subseteq \hat{X}^{(t)}$, and for each $i$-active cluster $C\in \scrC^{(t)}_{i+1}$, $\hat{X}^{(t)}\cap C$ has at least one vertex outside $X^{(t)}$.
\item\label{Invariant4} $|X^{(t)}| = \sum_{t'\leq t}|\hat{F}^{(t')}_{i}\setminus \hat{F}^{(t')}_{\inh,i}|$.
\end{enumerate}
In particular, if some arrival $t$ has maximum level $L^{(t)}$ less than $i$, then $\hat{X}^{(t)}$ and $X^{(t)}$ are simply empty, which clearly satisfy all the invariants. In what follows, we first describe the construction of $\hat{X}^{(t)}$ and $X^{(t)}$, and then argue why taking $X:=X^{(n)}$ gives a feasible dual solution.

We now construct $\hat{X}^{(t)}$ and $X^{(t)}$ assuming that $\hat{X}^{(t-1)}$ and $X^{(t-1)}$ (satisfying the invariants) are given. Recall the following from \Cref{sect:OnlineSolution}: $\scrC^{(t)}_{i+1}$ is exactly the clustering obtained by contracting $\hat{F}^{(t)}_{i}$ over $\scrC^{(t)}_{i}$. The virtual edges $\hat{F}^{(t)}_{\inh,i}\subseteq \hat{F}^{(t)}_{i}$ are inherited. $\scrC^{(t)}_{\inh,i}$ is an intermediate clustering obtained by contracting $\hat{F}^{(t)}_{\inh,i}$ over $\scrC^{(t)}_{i}$. 

\medskip

\noindent\underline{Useful Properties on Clusterings.} The argument below will heavily exploit some useful properties of clusterings $\scrC^{(t-1)}_{i}, \scrC^{(t-1)}_{i+1}, \scrC^{(t)}_{i},\scrC^{(t)}_{\inh,i}$ and $\scrC^{(t)}_{i+1}$. First, directly from the clustering procedure, we have
\[
\scrC^{(t-1)}_{i}\preceq\scrC^{(t-1)}_{i+1},\qquad\text{and}\qquad \scrC^{(t)}_{i}\preceq\scrC^{(t)}_{\inh,i}\preceq\scrC^{(t)}_{i+1}.
\]
Second, recall that \Cref{ob:ClusteringRelationSameLevel} and \Cref{lemma:ClusteringRelationInherit} provide
\[
\scrC^{(t-1)}_{i}\preceq \scrC^{(t)}_{i},\qquad\text{and}\qquad\scrC^{(t-1)}_{i+1}\preceq \scrC^{(t)}_{\inh,i}.
\]
We sometimes consider the union of $i$-active clusters in these clusterings. Let $\ActUn(\scrC')$ denote the union of $i$-active clusters in a clustering $\scrC'$. We can easily observe that
\[
\ActUn(\scrC^{(t-1)}_{i}) = \ActUn(\scrC^{(t-1)}_{i+1})\subseteq \ActUn(\scrC^{(t)}_{i}) = \ActUn(\scrC^{(t)}_{\inh,i}) = \ActUn(\scrC^{(t)}_{i+1})
\]
by the clustering procedure and the above properties.

\medskip

\noindent\underline{Construction of $\hat{X}^{(t)}$.} Let $(u^{(t)},v^{(t)})$ be the new demand pair of arrival $t$. Initially, set $\hat{X}^{(t)} = \hat{X}^{(t-1)}$. If $\level(u^{(t)}),\level(v^{(t)})\geq i$, let $C_{u},C_{v}\in \scrC^{(t)}_{\inh,i}$ be the clusters containing $u^{(t)}$ and $v^{(t)}$ respectively (possibly $C_{u}$ and $C_{v}$ are the same cluster). 
\begin{itemize}
\item If $C_{u}$ and $C_{v}$ are the same cluster, add $u^{(t)}$ into $\hat{X}^{(t)}$ if $u^{(t)}$ and $v^{(t)}$ are the only vertices with level at least $i$ in $C_{u}$.
\item If $C_{u}$ and $C_{v}$ are different clusters, add $u^{(t)}$ (resp. $v^{(t)}$) into $\hat{X}^{(t)}$ if $u^{(t)}$ (resp. $v^{(t)}$) are the only vertices with level at least $i$ in $C_{u}$ (resp. $C_{v}$).
\end{itemize}

\begin{lemma}
\label{claim:Xhat}
We have the following.
\begin{enumerate}[label=(\alph*)]
\item\label{item:Xhat1} $\hat{X}^{(t)}$ is contained in the union of $i$-active clusters in $\scrC^{(t)}_{\inh,i}$, i.e., $\hat{X}^{(t)}\subseteq \ActUn(\scrC^{(t)}_{\inh,i})$.
\item\label{item:Xhat2} Vertices in $\hat{X}^{(t)}$ have pairwise distances at least $2r$ in $\calM$.
\item\label{item:Xhat3} For each $i$-active cluster $C\in\scrC^{(t)}_{\inh,i}$, $\hat{X}^{(t)}\cap C$ has at least one vertex outside $X^{(t-1)}$.
\end{enumerate}
\end{lemma}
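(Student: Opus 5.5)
The plan is to prove the three items separately, each by combining the invariants for $\hat{X}^{(t-1)}, X^{(t-1)}$ (Invariants~\ref{Invariant1}--\ref{Invariant3} at arrival $t-1$) with the clustering relations listed above. The relations used are $\scrC^{(t-1)}_{i+1}\preceq\scrC^{(t)}_{\inh,i}$ (\Cref{lemma:ClusteringRelationInherit}), $\scrC^{(t-1)}_{i}\preceq\scrC^{(t)}_{i}$ (\Cref{ob:ClusteringRelationSameLevel}), and the chain of equalities and inclusions of $\ActUn(\cdot)$. Throughout, $\hat{X}^{(t)}$ consists of $\hat{X}^{(t-1)}$ plus possibly $u^{(t)}$ and/or $v^{(t)}$, so each item splits into old vertices and new vertices.

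\textbf{Item (a).} Old vertices lie in $\ActUn(\scrC^{(t-1)}_{i+1})\subseteq\ActUn(\scrC^{(t)}_{\inh,i})$ by Invariant~\ref{Invariant1}. A new vertex $u^{(t)}$ is added only if $\level(u^{(t)})\geq i$, so its $\scrC^{(t)}_{\inh,i}$-cluster has level at least $i$ and is therefore $i$-active.

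\textbf{Item (c).} Fix an $i$-active $C\in\scrC^{(t)}_{\inh,i}$. Since $\scrC^{(t-1)}_{i+1}\preceq\scrC^{(t)}_{\inh,i}$, $C$ is a union of $\scrC^{(t-1)}_{i+1}$-clusters together with new vertices, namely $u^{(t)}, v^{(t)}$.
\begin{itemize}
\item If $C$ contains some $i$-active $C'\in\scrC^{(t-1)}_{i+1}$, then Invariant~\ref{Invariant3} gives a vertex of $\hat{X}^{(t-1)}\cap C'\setminus X^{(t-1)}$, and this vertex lies in $\hat{X}^{(t)}\cap C$.
\item Otherwise every vertex of $C$ of level at least $i$ is new. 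Indeed, an old vertex of level $\geq i$ would make its $\scrC^{(t-1)}_{i+1}$-cluster $i$-active, and levels do not change since mates are fixed. Hence the level-$\geq i$ vertices of $C$ lie among $u^{(t)}, v^{(t)}$.
\end{itemize}
In the second case, the construction rule adds $u^{(t)}$ or $v^{(t)}$ (whichever lies in $C$, in either the same-cluster or different-cluster branch). Since the added vertex is new, it is not in $X^{(t-1)}\subseteq V^{(t-1)}$. Note that $\level(u^{(t)})=\level(v^{(t)})$, so if $C$ is active then both are $\geq i$ and the rule applies.

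\textbf{Item (b)} is the main obstacle. Old pairs are handled by Invariant~\ref{Invariant2}. For a new vertex, say $u^{(t)}\in C_u$, and any other $x\in\hat{X}^{(t)}$, I will argue that $x\notin C_u$ and that $x$ lies in an $i$-active $\scrC^{(t)}_{i}$-cluster distinct from that of $u^{(t)}$.
\begin{itemize}
\item $x\notin C_u$: every point of $\hat{X}$ has level $\geq i$ (the old ones by induction, since they were added with level $\geq i$), while $u^{(t)}$ is added only when it is the unique level-$\geq i$ vertex of $C_u$. In the same-cluster case $v^{(t)}$ is also in $C_u$, but only $u^{(t)}$ is added.
\item Active $\scrC^{(t)}_{i}$-clusters: the $\scrC^{(t)}_{i}$-cluster $D_u$ containing $u^{(t)}$ is $i$-active and lies inside $C_u$. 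The cluster $D_x$ containing $x$ is $i$-active, because $x\in\ActUn(\scrC^{(t)}_{i})$ by item (a) and the $\ActUn$ equalities. Moreover $D_x\neq D_u$, since $D_u\subseteq C_u$ and $x\notin C_u$.
\end{itemize}
Then \Cref{lemma:ClustersGap} gives $\dist_{\calM}(u^{(t)},x)\geq\dist_{\calM/\scrC^{(t)}_{i}}(D_u,D_x)\geq 2^{i}=2r$.

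The case where both $u^{(t)}$ and $v^{(t)}$ are added, with $C_u\neq C_v$, is covered by the same argument. There $D_u\subseteq C_u$ and $D_v\subseteq C_v$ are distinct active $\scrC^{(t)}_{i}$-clusters.
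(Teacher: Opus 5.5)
Your proof is correct and follows the paper's argument item by item. You use the same old/new split for (a), the same case distinction in (c) on whether $C$ contains an $i$-active $\scrC^{(t-1)}_{i+1}$-cluster, and the same reduction in (b) to \Cref{lemma:ClustersGap} via distinct $i$-active $\scrC^{(t)}_{i}$-clusters. The deviations are minor. In (b) you separate $x$ from $C_u$ using that every vertex of $\hat{X}$ has level at least $i$, whereas the paper shows that the $\scrC^{(t)}_{i}$-cluster of $u^{(t)}$ avoids $\ActUn(\scrC^{(t-1)}_{i})$. You also explicitly treat the pair $u^{(t)},v^{(t)}$ when both are added, which the paper's ``it suffices'' reduction leaves implicit.
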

\begin{proof}
\Cref{item:Xhat1} is straightforward from the construction of $\hat{X}^{(t)}$: the new vertices of $\hat{X}^{(t)}$ (i.e., one or more of $u^{(t)}$ and $v^{(t)}$) always come from the $i$-active clusters $C_{u},C_{v}\in \scrC^{(t)}_{\inh,i}$; the old vertices of $\hat{X}^{(t)}$ satisfy $\hat{X}^{(t-1)}\subseteq \ActUn(\scrC^{(t-1)}_{i+1})$ (by Invariant \ref{Invariant1}), and thus $\hat{X}^{(t-1)}\subseteq \ActUn(\scrC^{(t)}_{\inh,i})$.

Next, we show \Cref{item:Xhat2}. Suppose $C_{u}$ and $C_{v}$ are different clusters (an analogous argument works for the case $C_{u}=C_{v}$). It suffices to show that if we add $u^{(t)}$ into $\hat{X}^{(t)}$, this new $u^{(t)}$ is far from any vertices in $\hat{X}^{(t-1)}$. 
Recall that we add $u^{(t)}$ only if $u^{(t)}$ is the only vertex with level at least $i$ in $C_{u}$. This means the $i$-active cluster $C'_{u}\in \scrC^{(t)}_{i}$ containing $u^{(t)}$ also has $u^{(t)}$ as the only vertex with level at least $i$ (since $C'_{u}\subseteq C_{u}$), and thus, $C'_{u}$ is disjoint from any $i$-active cluster in $\scrC^{(t-1)}_{i}$ (since $\scrC^{(t-1)}_{i}\preceq \scrC^{(t)}_{i}$). Therefore, $C'_{u}$ is disjoint from $\ActUn(\scrC^{(t-1)}_{i}) = \ActUn(\scrC^{(t-1)}_{i+1})\supseteq \hat{X}^{(t-1)}$.

Now consider any vertex $x\in \hat{X}^{(t-1)}$. It must belong to an $i$-active $\scrC^{(t)}_{i}$-cluster $C'_{x}$ (since $\hat{X}^{(t-1)}\subseteq \ActUn(\scrC^{(t-1)}_{i+1})\subseteq \ActUn(\scrC^{(t)}_{i})$), and this $C'_{x}$ cannot be the same as $C'_{u}$ (since $C'_{u}$ is disjoint from $\hat{X}^{(t-1)}$). Hence, we can conclude
\[
\dist_{\calM}(x,u^{(t)})\geq \dist_{\calM/\scrC^{(t)}_{i}}(C'_{x},C'_{u})\geq 2^{i}
\]
as desired, where the second inequality is by \Cref{lemma:ClustersGap}.

Regarding \Cref{item:Xhat3}, if an $i$-active cluster $C\in\scrC^{(t)}_{\inh,i}$ contains an $i$-active cluster in $\scrC^{(t-1)}_{i+1}$, then by Invariant \ref{Invariant3} of $\hat{X}^{(t-1)}$, $\hat{X}^{(t-1)}\cap C$ will have at least one vertex outside $X^{(t-1)}$. Thus, we only need to pay attention to those $i$-active clusters $C\in \scrC^{(t)}_{\inh,i}$ containing no $i$-active cluster in $\scrC^{(t-1)}_{i+1}$. Because $\scrC^{(t-1)}_{i+1}\preceq\scrC^{(t)}_{\inh,i}$ (\Cref{lemma:ClusteringRelationInherit}), such a cluster $C$ must have no vertex with level at least $i$ from $V^{(t-1)}$. In other words, such a cluster $C$ exists only when $\level(u^{(t)}) = \level(v^{(t)})\geq i$ (since $C$ is $i$-active) and it must contain either $u^{(t)}$ or $v^{(t)}$ (or both). Then by our construction of $\hat{X}^{(t)}$, we indeed add a new vertex from $C$ into $\hat{X}^{(t)}$ as desired.
\end{proof}

\medskip

\noindent\underline{Construction of $X^{(t)}$.} Initially, set $X^{(t)} = X^{(t-1)}$.

\begin{observation}
Each $i$-active cluster in $\scrC^{(t)}_{\inh,i}$ is contained by an $i$-active cluster in $\scrC^{(t)}_{i+1}$. Each $i$-active cluster in $\scrC^{(t)}_{i+1}$ only contains $i$-active clusters in $\scrC^{(t)}_{\inh,i}$.
\end{observation}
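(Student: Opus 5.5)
The plan is to derive both claims directly from two facts: the chain $\scrC^{(t)}_{i}\preceq\scrC^{(t)}_{\inh,i}\preceq\scrC^{(t)}_{i+1}$, and the rule that the level of a cluster is the maximum level of its vertices. The key structural point is that every contraction producing $\scrC^{(t)}_{\inh,i}$ and $\scrC^{(t)}_{i+1}$ from $\scrC^{(t)}_{i}$ uses only virtual edges of $H^{(t)}_{i}$. The vertices of $H^{(t)}_{i}$ are exactly the $i$-active clusters of $\scrC^{(t)}_{i}$. Consequently, every cluster of $\scrC^{(t)}_{\inh,i}$ or $\scrC^{(t)}_{i+1}$ takes one of two forms:
\begin{itemize}
\item an untouched $i$-inactive cluster of $\scrC^{(t)}_{i}$, or
\item a union of $i$-active $\scrC^{(t)}_{i}$-clusters (one connected component of $\hat{F}^{(t)}_{\inh,i}$, respectively $\hat{F}^{(t)}_{i}$).
\end{itemize}
I would state this dichotomy first, since both claims follow from it.

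For the first claim, take an $i$-active cluster $C\in\scrC^{(t)}_{\inh,i}$. Since $\scrC^{(t)}_{\inh,i}\preceq\scrC^{(t)}_{i+1}$, there is a cluster $C'\in\scrC^{(t)}_{i+1}$ with $C\subseteq C'$. Because $\level$ of a cluster is a maximum over its vertices, it is monotone under inclusion. Hence $\level(C')\geq\level(C)\geq i$, so $C'$ is $i$-active.

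For the second claim, take an $i$-active cluster $C'\in\scrC^{(t)}_{i+1}$. By the dichotomy, $C'$ cannot be an untouched $i$-inactive cluster, so it is a union of $i$-active $\scrC^{(t)}_{i}$-clusters. Now let $C\in\scrC^{(t)}_{\inh,i}$ with $C\subseteq C'$. Apply the dichotomy to $C$ as well. If $C$ were an $i$-inactive $\scrC^{(t)}_{i}$-cluster, it would be a $\scrC^{(t)}_{i}$-cluster sitting inside $C'$. But $C'$ is a union of $i$-active $\scrC^{(t)}_{i}$-clusters, and $\scrC^{(t)}_{i}$ is a partition, so every $\scrC^{(t)}_{i}$-cluster inside $C'$ is $i$-active, a contradiction. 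Therefore $C$ is a union of $i$-active $\scrC^{(t)}_{i}$-clusters, and is itself $i$-active. Equivalently, one can say that $\hat{F}^{(t)}_{\inh,i}\subseteq\hat{F}^{(t)}_{i}$ never touches inactive clusters, and the inactive clusters are copied verbatim into both clusterings.

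There is no real obstacle here. The only point needing care is stating the dichotomy precisely. In particular, the contraction of $\hat{F}^{(t)}_{\inh,i}$ over $\scrC^{(t)}_{i}$ must be read as in Step~\ref{item:Merging}: inactive clusters are copied, and active clusters are merged along the forest's components. Isolated active clusters count as singleton components and remain active.
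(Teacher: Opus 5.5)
Your proof is correct and follows the same reasoning as the paper, which simply notes that passing from $\scrC^{(t)}_{\inh,i}$ to $\scrC^{(t)}_{i+1}$ only merges $i$-active clusters, while the $i$-inactive clusters of $\scrC^{(t)}_{i}$ are copied unchanged into both. Your dichotomy, together with the monotonicity of $\level$ under inclusion, is a careful unpacking of that one-line argument.
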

\begin{proof}
This is simply because, from $\scrC^{(t)}_{\inh,i}$ to $\scrC^{(t)}_{i+1}$, the clustering procedure only merges $i$-active clusters.
\end{proof}

For each $i$-active cluster $C\in \scrC^{(t)}_{i+1}$, if it is made up of $k$ many $i$-active clusters in $\scrC^{(t)}_{\inh,i}$, then by \Cref{item:Xhat3} in \Cref{claim:Xhat}, $\hat{X}^{(t)}\cap C$ has at least $k$ vertices outside $X^{(t-1)}$, and we add arbitrary $k-1$ of them into $X^{(t)}$.

\medskip

\noindent\underline{Proving Invariants of $\hat{X}^{(t)}$ and $X^{(t)}$.} Invariant \ref{Invariant1} is by \Cref{item:Xhat1} in \Cref{claim:Xhat} and the fact that the union of $i$-active clusters in $\scrC^{(t)}_{\inh,i}$ is the same as that in $\scrC^{(t)}_{i+1}$. Invariant \ref{Invariant2} has also been proven in \Cref{claim:Xhat}. Invariant \ref{Invariant3} is by the construction of $X^{(t)}$. 

Finally, to see Invariant \ref{Invariant4}, it suffices to show that $|X^{(t)}|-|X^{(t-1)}| = |\hat{F}^{(t)}_{i}| - |\hat{F}^{(t)}_{\inh,i}|$. Furthermore, we can observe that $|\hat{F}^{(t)}_{i}| = |\scrC^{(t)}_{i}| - |\scrC^{(t)}_{i+1}|$, since $\scrC^{(t)}_{i+1}$ is obtained by contracting $\hat{F}^{(t)}_{i}$ over $\scrC^{(t)}_{i}$, and similarly, we have $|\hat{F}^{(t)}_{\inh,i}| = |\scrC^{(t)}_{i}| - |\scrC^{(t)}_{\inh,i}|$. By the construction of $X^{(t)}$, the number of new vertices added to $X^{(t)}$, i.e., $|X^{(t)}|-|X^{(t-1)}|$, is exactly the number of $i$-active clusters in $\scrC^{(t)}_{\inh,i}$ minus the number of $i$-active clusters in $\scrC^{(t)}_{i+1}$, which is the same as $|\scrC^{(t)}_{\inh,i}| - |\scrC^{(t)}_{i+1}|$ since the $i$-inactive clusters in $\scrC^{(t)}_{\inh,i}$ are the same as those in $\scrC^{(t)}_{i+1}$. Combining all the above, we get $|X^{(t)}|-|X^{(t-1)}| = |\hat{F}^{(t)}_{i}| - |\hat{F}^{(t)}_{\inh,i}|$.

\paragraph{Feasibility of the Dual Solution $y$.} Recall that we have two steps to show the feasibility of the dual solution $y$. First, it is required that $y$ satisfies all the dual constraints, which is indeed the case by \Cref{ob:DualSolutionBallGrowing} and the way we construct $y$. Second, we also need to show that $y$ is zero outside the actual domain ${\cal S}$ (recall that ${\cal S}$ collects all cuts separating at least one cluster in $\scrC^{(n)}_{\max}$).

Consider a vertex $v\in X$. When we grow a ball with radius $r$ around $v$, the cuts $S$ for which we may add a positive value to $y(S)$ must satisfy $S\subseteq \Ball(v,r) = \{u\in V^{(n)}\mid \dist_{\calM}(v,u)\leq r\}$. However, by Invariants \ref{Invariant1}, \ref{Invariant2} and \ref{Invariant3}, in the cluster $C_{v}\in \scrC^{(n)}_{i+1}$ containing $v$, there must be another vertex $v'\in C_{v}\cap (\hat{X}^{(n)}\setminus X)$ such that $\dist_{\calM}(v,v')\geq 2r$. Hence $v'$ is outside all such cuts $S$, which means that all such $S$ will separate $C_{v}$. Finally, since $\scrC^{(n)}_{i+1}\preceq \scrC^{(n)}_{\max}$, all such $S$ will separate the $\scrC^{(n)}_{\max}$-cluster containing $C_{v}$, as desired.

\paragraph{Completing the Proof.} By Invariant \ref{Invariant4}, the value of the dual solution $y$ is 
\[
D = \sum_{S}y(S) = |X|r = \sum_{t}|\hat{F}^{(t)}_{i}\setminus \hat{F}^{(t)}_{\inh,i}|\cdot 2^{i-1}.
\]
Combining it with $D\leq \OPT_{\LP}\leq O(\OPT)$, we get the original lemma.

\end{proof}

\section{Conclusion} \label{sec:future}

In this work we initiate the study of low-recourse algorithms for online Steiner forest.
We gave a constant-competitive algorithm with $O(\log n)$ amortized recourse.
This prompts several natural follow-up questions:

\begin{question}
    Is there an algorithm with $O(1)$ recourse?
\end{question}
\begin{question} \label{q2}
    Can the recourse be de-amortized?
\end{question}
\begin{question} \label{q4}
    Can one handle the fully-dynamic setting, where terminal pairs both arrive and depart?
    (What about the deletion-only model?)
\end{question}

For completeness, we also remark that for the Steiner \emph{tree} problem,
an algorithm with worst-case (un-amortized) constant recourse is not known for the fully dynamic setting.

\begin{remark}    
Obtaining $O(\log n)$ recourse in the \emph{deletion-only setting} (of online Steiner forest) is straightforward, at least if we assume that the optimal cost is bounded polynomially in $n$: one can recompute the solution whenever the optimal cost decreases by a factor 2.\footnote{%
We can further remove the assumption by \emph{bucketing} the demand pairs. That is, initially, we put a pair $(u,v)$ with $\dist_{\calM}(u,v)\in[n^{i},n^{i+1})$ into the $i$-th bucket, and then for each bucket $i$, create its own initial solution $F_{i}$ (let the entire initial solution be the union of all $F_{i}$). Note that $|F_{i}|$ should be proportional to the bucket size, since the original graph is complete with edge weights forming a metric.
With this initial solution, whenever the optimal cost is halved, we only need to recompute the $F_{i}$ of the top $O(1)$ buckets. Each $F_{i}$ is recomputed $O(\log n)$ times, leading to $O(\log n)$ amortized recourse.
}
\end{remark}

\section*{Acknowledgements}

The authors want to thank Roie Levin and Anupam Gupta for helpful discussions.

\appendix

\section{The Online (Timed) Gluttonous Algorithm}
\label{sect:OnlineGlut}

In this section, we give the formal description of an online simulation of the timed gluttonous algorithm in \cite{DBLP:conf/stoc/Gupta015}, which is a classic offline Steiner forest algorithm. We can show that it is $O(\log n)$-competitive.

\begin{algorithm}[H]
\caption{Online (Timed) Gluttonous}
\label{algo:OnlineGluttonous}
\begin{algorithmic}[1]
\State Define $\level(\cdot)$ for terminals and clusters as in \Cref{sect:Clustering}.
\State Initialize $\scrC$ to be an empty clustering of the initially empty terminal set.
\While{a demand pair $(u_{t},v_{t})$ arrives}
\State Add two clusters $\{u_{t}\},\{v_{t}\}$ into $\scrC$
\For{$i=0$ to the max level}
\While{exist $i$-active clusters  $C_{1},C_{2}\in \scrC$ s.t.~$\dist_{\calM/\scrC}(C_{1},C_{2})<2^{i+1}$}
\State In $\scrC$, replace $C_{1}$ and $C_{2}$ with $C_{1}\cup C_{2}$
\EndWhile
\EndFor
\EndWhile
\end{algorithmic}
\end{algorithm}

We can use the same idea as in the proof of \Cref{lemma:OnlineCost}
to prove an analogue of \Cref{lemma:OnlineCost}: for each level $i$, the number of level-$i$ merges times $2^{i+1}$ is at most $O(\OPT)$. Therefore, this algorithm is $O(\log n)$-competitive since the total cost of merges from outside of the top $O(\log n)$ levels is negligible (since each level has at most $O(n)$ merges, and the cost per merge decreases exponentially as $i$ goes down). 

\bibliographystyle{alpha}
\bibliography{ref}

\end{document}